\pdfoutput=1
\documentclass[acmsmall,nonacm,dvipsnames]{acmart}
\usepackage{microtype}
\usepackage{amsthm}
\usepackage{amsmath}
\usepackage{bm}
\usepackage{mathtools}
\usepackage{makecell}
\usepackage{threeparttable}
\usepackage{tabularx}
\usepackage{thmtools}
\usepackage{thm-restate}

\interfootnotelinepenalty=10000 

\usepackage[noend]{algpseudocode}
\usepackage[many]{tcolorbox}
\newenvironment{algobox}[2][]{
    \begin{center}
    \begin{tcolorbox}[enhanced,breakable,title=\centering \large {#2},colback=white,colframe=black!80,width=\textwidth]
    \underline{\textbf{Code for a party $P_i${#1}}}
    \begin{algorithmic}[1]
}{
    \end{algorithmic} \end{tcolorbox} \end{center}
}
\algrenewcommand{\algorithmiccomment}[1]{\hfill \emph{\textcolor{red}{//{#1}}}}

\algblock{Upon}{EndUpon} \algrenewtext{Upon}[1]{\textbf{upon} {#1} \textbf{do}}
\makeatletter \ifthenelse{\equal{\ALG@noend}{t}} {\algtext*{EndUpon}} \makeatother
\algblock{When}{EndWhen} \algrenewtext{When}[1]{\textbf{when} {#1} \textbf{do}}
\makeatletter \ifthenelse{\equal{\ALG@noend}{t}} {\algtext*{EndWhen}} \makeatother

\DeclarePairedDelimiter{\floor}{\lfloor}{\rfloor}
\DeclarePairedDelimiter{\ceil}{\lceil}{\rceil}

\newcommand{\BO}{\mathcal{O}}
\newcommand{\sra}{\mathsf{SRA}}
\newcommand{\pra}{\mathsf{PRA}}

\newcommand{\ext}{\mathsf{EXT}}
\newcommand{\rec}{\mathsf{REC}}
\newcommand{\Enc}{\mathsf{Enc}}
\newcommand{\Dec}{\mathsf{Dec}}

\newcommand{\ba}{\mathsf{BA}}
\newcommand{\ca}{\mathsf{CA}}
\newcommand{\kca}{\mathsf{KCA}}

\newcommand{\KEY}{\texttt{KEY}}
\newcommand{\HASH}{\texttt{HASH}}
\newcommand{\SYM}{\texttt{SYM}}
\newcommand{\SUC}{\texttt{SUC}}
\newcommand{\YOURS}{\texttt{YOURS}}
\newcommand{\MINE}{\texttt{MINE}}
\newcommand{\msgpair}[2]{\langle {#1}, {#2} \rangle}

\begin{document}

\title{Extending Asynchronous Byzantine Agreement with Crusader Agreement}

\author{Mose Mizrahi Erbes} %First Name: Mose, Last Names: Mizrahi Erbes
\orcid{0009-0009-9771-0845}
\email{mmizrahi@ethz.ch}
\affiliation{
  \institution{ETH Zurich}
  \country{Switzerland}
  \city{Zurich}
}

\author{Roger Wattenhofer}
\orcid{0000-0002-6339-3134}
\email{wattenhofer@ethz.ch}
\affiliation{
  \institution{ETH Zurich}
  \country{Switzerland}
  \city{Zurich}
}

\begin{CCSXML}
    <ccs2012>
       <concept>
           <concept_id>10003752.10003809.10010172</concept_id>
           <concept_desc>Theory of computation~Distributed algorithms</concept_desc>
           <concept_significance>500</concept_significance>
           </concept>
       <concept>
           <concept_id>10003752.10003777.10003789</concept_id>
           <concept_desc>Theory of computation~Cryptographic protocols</concept_desc>
           <concept_significance>500</concept_significance>
           </concept>
       <concept>
           <concept_id>10003752.10003777.10003780</concept_id>
           <concept_desc>Theory of computation~Communication complexity</concept_desc>
           <concept_significance>500</concept_significance>
           </concept>
       <concept>
           <concept_id>10002978.10002979.10002984</concept_id>
           <concept_desc>Security and privacy~Information-theoretic techniques</concept_desc>
           <concept_significance>500</concept_significance>
           </concept>
     </ccs2012>
\end{CCSXML}
    
\ccsdesc[500]{Theory of computation~Distributed algorithms}
\ccsdesc[500]{Theory of computation~Cryptographic protocols}
\ccsdesc[500]{Theory of computation~Communication complexity}
\ccsdesc[500]{Security and privacy~Information-theoretic techniques}

\begin{abstract}
    In this work, we study multivalued byzantine agreement (BA) in an asynchronous network of $n$ parties where up to $t < \frac{n}{3}$ parties are byzantine. We present a new reduction from multivalued BA to binary BA. It allows one to achieve BA on $\ell$-bit inputs with one instance of binary BA, one instance of crusader agreement (CA) on $\ell$-bit inputs and $\Theta(\ell n + n^2)$ bits of additional communication. 
    
    As our reduction uses multivalued CA, we also design two new information-theoretic CA protocols for $\ell$-bit inputs. In the first one, we use almost-universal hashing to achieve statistical security with probability $1 - 2^{-\lambda}$ against $t < \frac{n}{3}$ faults with $\Theta(\ell n + n^2(\lambda + \log n))$ bits of communication. Following this, we replace the hashes with error correcting code symbols and add a preliminary step based on the synchronous multivalued BA protocol COOL [DISC '21] to obtain a second, perfectly secure CA protocol that can for any $\varepsilon > 0$ be set to tolerate $t \leq \frac{n}{3 + \varepsilon}$ faults with $\BO\bigl(\frac{\ell n}{\min(1, \varepsilon^2)} + n^2\max\bigl(1, \log \frac{1}{\varepsilon}\bigr) \bigr)$ bits of communication. Our CA protocols allow one to extend binary BA to multivalued BA with a constant round overhead, a quadratic-in-$n$ communication overhead, and information-theoretic security.
\end{abstract}

\maketitle

\section{Introduction}

In the byzantine agreement (BA) problem \cite{psl80, lsp82}, we have a network of $n$ parties with inputs, up to $t$ of which are byzantine. The parties wish to agree on a common output based on their inputs, such that if they have a common input, then they agree on it. For this, they run a protocol. The challenge is that the byzantine parties can arbitrarily deviate from the protocol.

On $\ell$-bit inputs, BA requires $\Omega(\ell n + \dots)$ bits of communication (when $t = \Omega(n)$) \cite{fh06}. Typically, BA protocols do not reach this bound, as reaching requires non-trivial techniques based on erasure/error correction that are not necessary for binary BA. Hence, a standard method to reach this bound is to take some other BA (or byzantine broadcast) protocol that does not as efficiently support long inputs, and running it with bit inputs or with short inputs (e.g.\ cryptographic hashes) in broader BA protocols. In the literature, these broader protocols are called \emph{extension protocols}. Extension protocols for BA (and byzantine broadcast) are widely studied \cite{tc84, fh06, pflr11, gp16, mr17, nayak20, gp21, chen21, lichen21, blln23, ck23, dm24, chen24, chen25, knr25}.

In this work, we consider an asynchronous network with up to $t < \frac{n}{3}$ byzantine parties. For this setting, we design a protocol $\ext$ that extends a binary BA protocol into a multivalued one. This extension is based on multivalued crusader agreement (CA) \cite{d82}, a weaker variant of BA where the parties agree on their common input if they have one, and otherwise for some value $v$ each output $v$ or $\bot$. Then, we design two efficient information-theoretic CA protocols, so that when we instantiate $\ext$ with them, we obtain the first information-theoretic asynchronous BA extensions we are aware of that tolerate $t \geq \frac{n}{5}$ faults with quadratic-in-$n$ overhead communication.

\subsection*{Contributions \& Related Work}

An important BA extension is Chen's COOL \cite{chen21}, which he recently refined in \cite{chen24}. COOL extends a binary BA protocol that optimally \cite{psl80,t84} tolerates $t < \frac{n}{3}$ faults with $\Theta(nt)$ bits of communication (e.g.\ \cite{bgp92}) into a BA protocol for $\ell$-bit inputs that tolerates $t < \frac{n}{3}$ faults with $\Theta(\ell n + nt)$ bits of communication. Note that the $\Theta(nt)$ term here is also optimal by the Dolev-Reischuk bound \cite{dr85}.

\begin{table}[h] \begin{threeparttable}
    \renewcommand{\tnote}[1]{\textsuperscript{#1}}
    \newcolumntype{Y}{>{\centering\arraybackslash}X}
    \setcellgapes{1.44pt}
    \setlength{\tabcolsep}{0.99\tabcolsep}
    \caption{The most communication-efficient asynchronous BA extensions we know. We assume $\lambda = \Omega(\log n)$ for the security parameter $\lambda$, assume $t = \Omega(n)$, let $C_A$ be the communication complexity of binary BA, and let $C_E$ be the communication complexity of electing a random leader party.}
    \makegapedcells
    \begin{tabularx}{1\columnwidth}{|c|c|c|Y|} \hline
        
        \textbf{Security} & \textbf{Threshold} & \textbf{Communication Complexity} & \textbf{References} \\ \hline

        perfect & $t < \frac{n}{3}$ & $\Theta(\ell n + nC_A + n^4\log n)$ & \cite{nayak20} \\ \hline
        
        perfect & $t < \frac{n}{3}$ & $\Theta(\ell n + nC_A + n^3\log n)$\tnote{a} & \cite{chen25} \\ \hline

        perfect & $t < \frac{n}{3}$ & $\Theta(\ell n + C_E + C_A + n^3\log n)$\tnote{b} & \cite{chen25} \\ \hline

        perfect & $t < \frac{n}{5}$ & $\Theta(\ell n + C_A + n^2)$\tnote{c} & \cite{lichen21} \\ \hline

        \textcolor{blue}{perfect} & \textcolor{blue}{$t \leq \frac{n}{3 + \varepsilon}$} & \textcolor{blue}{$\BO\bigl(\frac{\ell n}{\min(1, \varepsilon^2)} + C_A + n^2\max\bigl(1, \log \frac{1}{\varepsilon}\bigr) \bigr)$} & \textcolor{blue}{this paper ($\ext + \ca_2$)} \\ \hline

        \textcolor{blue}{statistical} & \textcolor{blue}{$t < \frac{n}{3}$} & \textcolor{blue}{$\Theta(\ell n + C_A + n^2\lambda)$} & \textcolor{blue}{this paper ($\ext + \ca_1$)} \\ \hline

        cryptographic & $t < \frac{n}{3}$ & $\Theta(\ell n + C_A + n^2\lambda)$ & \cite{nayak20} \\ \hline

        \makecell{cryptographic, \\ no front-running\tnote{d}} & $t \leq \frac{n}{3 + \varepsilon}$ & \makecell{$\Theta(\ell n + n \cdot \mathsf{poly}(\lambda))$, \\ assuming $\varepsilon > 0$ is constant} & \cite{blln23, ck23} \\ \hline
    
    \end{tabularx}
    \begin{tablenotes}
        \item[a, b] The $\log n$ factor in the term $n^3\log n$ is due to the $\Omega(\log n)$-bit protocol ID tags necessary to distinguish $n$ reliable broadcasts, which \cite{chen25} does not count toward complexity.
        \item[c] In \cite{lichen21}, the complexity when $t = \Omega(n)$ is $\BO(\ell n + C_A + n^2\log n)$. However, one can drop the $\log n$ factor by using near-MDS error correcting codes \cite{gi04, rs06, lm25} instead of Reed-Solomon codes.
        \item[d] No front-running means that the adversary cannot drop a message that a party sends by corrupting the party after the party sends the message. This restriction allows \cite{blln23,ck23} to circumvent the $\Omega(t^2)$ message complexity lower bound in \cite{acdnprs22} for adaptively secure BA.
    \end{tablenotes}
    \label{comparisons}
\end{threeparttable} \end{table}

An ``ideal'' asynchronous BA extension would match COOL. Such an extension protocol would \begin{itemize}
    \item run an underlying binary BA protocol only once,
    \item have a constant round overhead and a $\BO(\ell n + nt)$ communication complexity overhead,\footnote{The overhead of an extension is the extension's complexity excluding the execution(s) of the underlying binary BA protocol.}
    \item tolerate $t < \frac{n}{3}$ faults with perfect security.
\end{itemize}

Though we do not reach this ideal, we progress towards it with an asynchronous BA extension $\ext$ based on asynchronous data dissemination (ADD) \cite{dxr21} and on CA. Against $t < \frac{n}{3}$ faults, $\ext$ reduces BA on $\ell$-bit inputs to one instance of binary BA, one instance of CA on $\ell$-bit inputs, a constant number of additional rounds, $\Theta(n^2)$ extra messages and $\Theta(\ell n + n^2)$ bits of extra communication.

We follow $\ext$ with two information-theoretic CA protocols: a statistically secure protocol $\ca_1$ and a perfectly secure protocol $\ca_2$. The first one achieves security with probability $1 - 2^{-\lambda}$ against $t < \frac{n}{3}$ faults with $\Theta(\ell n + n^2(\lambda + \log n))$ bits of communication, while the second one can for any $\varepsilon > 0$ be set to tolerate $t \leq \frac{n}{3 + \varepsilon}$ faults with $\BO\bigl(\frac{\ell n}{\min(1, \varepsilon^2)} + n^2\max\bigl(1, \log \frac{1}{\varepsilon}\bigr)\bigr)$ bits of communication. These CA protocols are why we call our extensions $\ext + \ca_1$ and $\ext + \ca_2$ in Table \ref{comparisons}: These are $\ext$ instantiated with $\ca_1$ and $\ca_2$. Our CA protocols, which to our knowledge are not matched in the literature and are thus also of independent interest, allow us to obtain information-theoretically secure asynchronous BA extensions that run an underlying binary BA protocol only once, with a constant round overhead and a quadratic-in-$n$ communication overhead.

Among perfectly secure asynchronous BA extensions, our $\ext + \ca_2$ improves upon the A-COOL (asynchronous COOL) protocol by Li and Chen \cite{lichen21}. While A-COOL achieves the communication complexity $\Theta(\ell n + C_A + nt)$ against $t < \frac{n}{5}$ faults, our $\ext + \ca_2$ achieves the communication complexity $\BO\bigl(\frac{\ell n}{\min(1, \varepsilon^2)} + C_A + n^2\max\bigl(1, \log \frac{1}{\varepsilon}\bigr) \bigr)$ against $t \leq \frac{n}{3 + \varepsilon}$ faults. Whenever $\varepsilon > 0$ is constant, $\BO\bigl(\frac{\ell n}{\min(1, \varepsilon^2)} + C_A + n^2\max\bigl(1, \log \frac{1}{\varepsilon}\bigr) \bigr) = \BO(\ell n + C_A + n^2)$, which means that our $\ext + \ca_2$ can match A-COOL in terms of asymptotic complexity while providing greater fault tolerance. 

Meanwhile, our statistically secure $\ext + \ca_1$ is the most communication-efficient information-theoretic asynchronous BA extension we know of for $t \geq \frac{n}{5}$ and $\ell = o(n^2)$. In terms of asymptotic communication complexity, it matches the state-of-the-art cryptographic extension by Nayak, Ren, Shi, Vaidya and Xiang \cite{nayak20}, and it is only surpassed by the cryptographic extensions in \cite{blln23, ck23}, which tolerate $t \leq \frac{n}{3 + \varepsilon}$ faults, and limit the adversary's adaptivity to circumvent a $\Omega(t^2)$ message complexity lower bound for adaptively secure BA \cite{acdnprs22}.

A binary BA protocol that pairs well with our extensions is the statistically secure one in \cite{kkkss10} that tolerates $t$ static faults (which occur before protocol execution) with $\widetilde{\Theta}(n^2)$ bits of communication if $t \leq \frac{n}{3 + \varepsilon}$ for any constant $\varepsilon > 0$. Another one is the perfectly secure binary BA protocol in \cite{mmr15}, which assumes that the parties can access an ideal common coin (with which they can repeatedly obtain common random bits), and uses this assumption to tolerate $t < \frac{n}{3}$ adaptive corruptions with $\Theta(n^2)$ bits of expected communication in $\Theta(1)$ expected rounds.

Some other notable perfectly secure BA extensions are the ones in Chen's concurrent work \cite{chen25}, where Chen achieves multivalued asynchronous BA with either $n$ binary BA instances or with $\Theta(1)$ expected leader elections and binary BA instances, all coupled with $\Theta(\ell n + n^3\log n)$ bits of overhead communication. These protocols tolerate $t < \frac{n}{3}$ faults, which means that they asymptotically have less overhead communication than $\ext + \ca_2$ if $t \approx \frac{n}{3}$, but more overhead if $t \leq \frac{n}{3 + \varepsilon}$ for any constant $\varepsilon > 0$. Before these protocols, the most efficient perfectly secure extension protocol for $t < \frac{n}{3}$ faults was the one in \cite{nayak20} by Nayak, Ren, Shi, Vaidya and Xiang, which has $\Theta(\ell n + n^4\log n)$ bits of overhead communication, and runs a binary BA protocol $2n$ times. This extension follows the the design of an earlier one by \mbox{Patra with $\Theta(\ell n + n^5\log n)$ bits of overhead communication \cite{pflr11}.}

\subsection*{Overview}

Our extensions begin with a common starting point: An ``incomplete'' extension $\ext$. This extension is based on asynchronous data dissemination \cite{dxr21} and on multivalued CA. Against $t < \frac{n}{3}$ byzantine faults, $\ext$ reduces BA on $\ell$-bit inputs to one instance of CA on $\ell$-bit inputs, one instance of binary BA, a constant number of additional rounds and $\Theta(\ell n + n^2)$ bits of additional communication. So, $\ext$ lets us convert efficient multivalued CA protocols into efficient BA extensions. Note that there exist comparable extension protocols that use \mbox{graded consensus instead of crusader agreement \cite{knr25}.}

Following $\ext$, which we present in Section \ref{reducsec}, we design two new efficient information-theoretic crusader agreement protocols ($\ca_1$ and $\ca_2$) for use in $\ext$, respectively in Section \ref{statsec} and \mbox{Section \ref{perfsec}.} 

In $\ca_1$, we use statistical equality checks based on almost-universal hashing \cite{cw79}, which Fitzi and Hirt used to obtain the first BA extension in the literature with the complexity $\BO(\ell n + \dots)$ \cite{fh06}. Almost-universal hashing lets two parties with the values $a$ and $b$ agree on a joint random key $k$, and then use a keyed hash function $h(k, \cdot)$ to check if $a = b$ by checking if $h(k, a) = h(k, b)$, with little communication and only a negligible hash collision probability for distinct values. Using almost-universal hashing, we achieve crusader agreement with $\Theta(\ell n + n^2(\lambda + \log n))$ bits of communication except with probability less than $2^{-\lambda}$ against $t < \frac{n}{3}$ faults.

Then, we obtain $\ca_2$ by trying to upgrade $\ca_1$ into a perfectly secure protocol. We keep the overall design of $\ca_1$, which we present in Section \ref{statsec}. However, in $\ca_2$, the parties do not compare keyed hashes. Instead, they encode values with error correcting codes (ECC), and compare indexed ECC symbols. Unlike hash comparisons, these symbol comparisons are error-prone, as distinct strings can have partially coinciding ECC encodings. To limit these errors to a tolerable amount (to less than $\frac{n - 3t}{2} \geq \frac{\varepsilon n}{6 + 2\varepsilon}$ errors per party when $t \leq \frac{n}{3 + \varepsilon}$ for some $\varepsilon > 0$), we use $(n, \delta)$-error correcting codes with $\frac{n}{\delta} = \BO(\min(1, \varepsilon)^{-2})$, and we introduce a preliminary step based on the first two phases of Chen's COOL protocol \cite{chen21}, where we decrease the maximum possible number of non-$\bot$ values the parties might want to agree on from $n$ to at most $\ceil[\big]{\frac{8}{\min(1, \varepsilon)}}$.

\section{Preliminaries}

We consider an asynchronous network of $n$ message-passing parties $P_1, \dots, P_n$, fully connected via reliable and authenticated channels. We design our protocols against an adaptive adversary that corrupts up to $t$ parties during a protocol's execution, depending on the information it gathers as the protocol runs. A corrupted party becomes byzantine, and starts behaving arbitrarily. Meanwhile, a party who is never corrupted is honest. The adversary must deliver all messages from honest senders, but it can otherwise schedule messages as it wishes. It can also adaptively ``front-run'' messages, i.e.\ drop any message by corrupting the message's previously non-corrupt sender instead of delivering the message. Note that some works (e.g.\ the BA protocols/extensions \cite{bkll20,cks20,blln23,ck23}) forbid this to circumvent a lower bound of $\Omega(t^2)$ messages for adaptively secure BA \cite{acdnprs22}.

We work in the \emph{full-information} model where the adversary can see all messages and all parties' internal states. However, the adversary cannot predict the random values a party will generate.

When a party sends a message $m$ to every party, it \emph{multicasts} $m$. The adversary can corrupt a party while the party is multicasting a message $m$, and then deliver $m$ to only some parties.

The parties do not begin protocols with inputs. Rather, a party can \emph{acquire} an input while it is running a protocol. This will matter for $\ext$, which will terminate even if some parties never input to the underlying binary BA protocol, and for $\ca_1$, which will achieve statistical security against \emph{adversarially chosen inputs} (i.e.\ security when the adversary can determine the parties' $\ca_1$ inputs).

\paragraph*{Error Correcting Codes.} Like previous papers that aim for the complexity $\BO(\ell n + \dots)$, we use $(n, k)$-error correcting codes, which allow us to encode a bitstring $m$ of length $\ell$ into $n$ symbols in an alphabet $\Sigma = \{0,1\}^a$ where $a \approx \frac{\ell}{k}$, and then recover $m$ with some missing or incorrect symbols. Formally, we use the following encoding and decoding functions: \begin{itemize}
    \item $\Enc_k(m)$: This encoding function takes a string $m \in \{0, 1\}^\ell$, and encodes it using an $(n, k)$-code into a list of $n$ symbols $(s_1, \dots, s_n)$, each in $\Sigma$. 
    \item $\Dec_k(s_1, \dots, s_n)$: This decoding function decodes a list of $n$ symbols $s_1, \dots, s_n \in \Sigma \cup \{\bot\}$ (where $\bot$ denotes missing symbols) into a string $m \in \{0, 1\}^\ell$. With respect to any string $m'$ where $\Enc_k(m') = (s_1', \dots, s_n')$, we call each symbol $s_i'$ correct if $s_i' = s_i$, missing if $s_i' = \bot$, and incorrect otherwise. If with respect to a string $m' \in \{0, 1\}^\ell$ the number of incorrect symbols $c$ and the number of missing symbols $d$ obey $2c + d \leq n - k$, then $\Dec_k(s_1, \dots, s_n)$ returns $m'$. Else, $\Dec_k(s_1, \dots, s_n)$ can return any $\ell$-bit string or the failure indicator $\bot$.
\end{itemize}

In the multivalued agreement literature, the usual error correcting codes are Reed-Solomon codes \cite{reed-solomon}. These codes permit $a = \Theta\bigl(\frac{\ell}{k} + \log n\bigr)$, as they require $a > n$ So, one usually finds a logarithmic communication complexity factor in works that use them. For instance, COOL as it is written requires $\BO(\ell n + nt\log t)$ bits of communication \cite{chen21}. However, Chen has recently proposed a way to drop the $\log n$ factor \cite{chen2024MVBA,chen25}. Instead of Reed-Solomon $(n, k)$-codes, one can use near-MDS $(n, k)$-codes \cite{gi04, rs06, lm25}, which permit smaller symbols. Concretely, there are near-MDS codes that permit $a = \Theta\bigl(\frac{\ell}{k} + \log \frac{n}{k}\bigr)$ \cite{lm25}. With this, $k = n - 2t > \frac{n}{3}$ (as in $\ext$ and $\ca_1$) allows symbols of size $\Theta\bigl(\frac{\ell}{n} + 1\bigr)$, while $k = \Omega(\min(1,\varepsilon^2) n)$ (as in $\ca_2$) allows symbols of size $\BO\bigl(\frac{\ell}{n \cdot \min(1, \varepsilon^2)} + \max\bigl(1, \log \frac{1}{\varepsilon}\bigr)\bigr)$.

\enlargethispage{0.9pt}

\subsection*{Protocols}

Below, we provide property-based definitions for the 5 kinds of protocols we use, with the implicit assumption that the honest parties run forever. Otherwise, some parties could fail to output even if a protocol required them to. However, we will design some of our protocols (e.g. $\ext$) so that they \emph{terminate}, i.e.\ so that each party can safely terminate (stop running) a protocol when it outputs.

\paragraph*{Byzantine Agreement.} In byzantine agreement, each party $P_i$ inputs some $v_i \in \{0, 1\}^\ell$ and outputs some $y_i \in \{0, 1\}^\ell \cup \{\bot\}$, where $\ell$ is some publicly known length parameter. A BA protocol achieves: \begin{itemize}
    \item \textbf{Validity:} If some $v \in \{0,1\}^\ell$ is the only honest input, then every honest output is $v$.
    \item \textbf{Consistency:} For some $y \in \{0, 1\}^\ell \cup \{\bot\}$, every honest output is $y$.
    \item \textbf{Liveness:} If every honest party acquires an input, then every honest party outputs.
    \item \textbf{Intrusion Tolerance:} Every honest output is either an honest input or $\bot$.
\end{itemize} The fundamental properties for BA are validity, consistency and liveness. Meanwhile, intrusion tolerance \cite{mr10} is a desirable property for multivalued BA that $\ext$ achieves for free. Intrusion tolerance is the reason why we have the output $\bot$: When the parties fail to agree on an honest input, they agree on the safe fallback output $\bot$. We do not allow the output $\bot$ for binary BA, because for binary inputs, validity implies intrusion tolerance even if the output $\bot$ is mapped to the output $0$.

\paragraph{Crusader Agreement.} In crusader agreement, each party $P_i$ inputs some $v_i \in \{0, 1\}^\ell$, and outputs some $y_i \in \{0, 1\}^\ell \cup \{\bot\}$. A crusader agreement protocol achieves: \begin{itemize}
    \item \textbf{Validity:} If some $v \in \{0,1\}^\ell$ is the only honest input, then every honest output is $v$.
    \item \textbf{Weak Consistency:} For some $y \in \{0,1\}^\ell$, each honest output is either $y$ or $\bot$.
    \item \textbf{Liveness:} If every honest party acquires an input, then every honest party outputs.
\end{itemize}

\paragraph*{$k$-Crusader Agreement.} A $k$-crusader agreement protocol is a crusader agreement protocol that instead of weak consistency achieves the following parametrized weakening of weak consistency: \begin{itemize}
    \item $\bm{k}$-\textbf{Weak Consistency:} For some $y_1, \dots, y_k \in \{0,1\}^\ell$, each honest output is in $\{y_1, \dots, y_k, \bot\}$.
\end{itemize}
In COOL \cite{chen21}, Chen uses $2$-crusader agreement as a stepping stone for crusader agreement and then BA in a synchronous network. Likewise, we will use $\ceil[\big]{\frac{8}{\min(1, \varepsilon)}}$-crusader agreement in $\ca_2$.

\paragraph*{Reliable Agreement.} In reliable agreement \cite{chl21, ddlmrs24}, each party $P_i$ inputs some $v_i \in \{0, 1\}^\ell$, and possibly outputs some $y_i \in \{0, 1\}^\ell$. A reliable agreement protocol achieves: \begin{itemize}
    \item \textbf{Validity:} If some $v \in \{0,1\}^\ell$ is the only honest input, then every honest output is $v$.
    \item \textbf{Consistency:} For some $y \in \{0, 1\}^\ell \cup \{\bot\}$, every honest output is $y$.
    \item \textbf{Liveness:} If the honest parties all acquire a common input, they they all output.
\end{itemize}
Unlike BA, reliable agreement only requires the parties to output if they all have a common input. This makes it easier than even crusader agreement. While \cite{ddlmrs24} uses reliable agreement (with some more properties) for termination, we will use it to ensure \mbox{weak consistency in $\ca_1$ and $\ca_2$.}

\paragraph*{Reconstruction.} Sometimes, the parties must all learn a common value $v^*$ that only (at least) $t + 1$ of them initially know. For this, we define reconstruction, where each party $P_i$ possibly inputs some $v_i \in \{0, 1\}^\ell$, and possibly outputs some $y_i \in \{0, 1\}^\ell$. To work properly, a reconstruction protocol requires there to exist some $v^* \in \{0, 1\}^\ell$ such that no honest party acquires any other input. When this is the case, a reconstruction protocol achieves: \begin{itemize}
    \item \textbf{Validity:} The honest parties can only output $v^*$, and they can do so only after some honest party acquires the input $v^*$.
    \item \textbf{Liveness:} If at least $t + 1$ honest parties acquire the input $v^*$, then every honest party \nolinebreak outputs.
    \item \textbf{Totality:} If some honest party outputs, then every honest party outputs.
\end{itemize}
Roughly, reconstruction is asynchronous data dissemination (ADD) \cite{dxr21}. The two minor differences are that ADD does not require totality, and that while in ADD a party without a bitstring input acquires the input $\bot$, in reconstruction such a party simply never acquires an input. In the \hyperref[recsec]{appendix}, we present and explain a perfectly secure terminating reconstruction protocol $\rec$ based on online error correction which we obtain by slightly modifying the ADD protocol in \cite{dxr21}. The complexity of $\rec$ is a constant number of rounds, $\Theta(n^2)$ messages and $\Theta(\ell n + n^2)$ bits of communication, like the ADD protocol in \cite{dxr21} (when it is implemented with an error correcting code with a constant-size alphabet). We will use $\rec$ once in each of $\ext$, $\ca_1$ and $\ca_2$, though only $\ext$ will benefit from the totality of $\rec$ and the fact that $\rec$ terminates without requiring the parties to run it forever.

\section{Extension via Crusader Agreement} \label{reducsec}

In this section, we present the extension $\ext$, where we use the terminating reconstruction protocol $\rec$, any terminating binary byzantine agreement protocol $\Pi_\ba$, and any intrusion tolerant crusader agreement\footnote{One can add intrusion tolerance (the guarantee that the honest parties output honest inputs or $\bot$) to any crusader agreement protocol $\Pi_\ca$ with a final step where each party checks if its $\Pi_\ca$ output equals its $\Pi_\ca$ input, and outputs $\bot$ if not.} protocol $\Pi_\ca$ for $\ell$-bit inputs. The protocol $\Pi_\ba$ must achieve the totality guarantee that if some honest party terminates it, then every honest party terminates it, even if not every honest party acquires an input. Note that one can easily add termination and this guarantee to any binary BA protocol with the terminating reliable agreement protocol in \cite{ddlmrs24}, which has the same guarantee. This reliable agreement protocol takes a constant number of rounds, and it involves $\BO(n^2)$ bits of communication just like Bracha's reliable broadcast protocol \cite{bracha87} which it is based on.

\begin{algobox}[]{Protocol $\ext$}
    \State let $y_i \gets \bot$
    \State join common instances of $\rec$, $\Pi_\ca$, and $\Pi_\ba$

    \Upon{acquiring the input $v_i$}
        \State input $v_i$ to $\Pi_\ca$
    \EndUpon

    \Upon{outputting some $v^* \in \{0, 1\}^\ell$ from $\Pi_\ca$}
        \State input $v^*$ to $\rec$
    \EndUpon

    \Upon{outputting $\bot$ from $\Pi_\ca$}
        \State multicast $\bot$
        \State if you have not input to $\Pi_\ba$ before, input 0 to $\Pi_\ba$
    \EndUpon

    \Upon{receiving $\bot$ from $t + 1$ parties}
        \State if you have not input to $\Pi_\ba$ before, input 0 to $\Pi_\ba$
    \EndUpon

    \Upon{terminating $\Pi_\ba$ with the input $0$}
        \State output $\bot$ and terminate
    \EndUpon

    \Upon{terminating $\rec$ with the output $v^*$}
        \State $y_i \gets v^*$
        \State if you have not input to $\Pi_\ba$ before, input 1 to $\Pi_\ba$
    \EndUpon

    \Upon{terminating $\Pi_\ba$ with the output 1}
        \State when $y_i \neq \bot$, output $y_i$ and terminate
    \EndUpon
\end{algobox}

Let us discuss the cryptographic extension in \cite{nayak20}. In it, the parties encode their inputs with an $(n, n-2t)$-error correcting code, and cryptographically accumulate their indexed symbols. Then, they achieve BA (with intrusion tolerance) on their accumulation inputs, agreeing on either $\bot$ or the accumulation of a party with some input $v^*$. In the former case, the extension outputs $\bot$. In the latter case, the extension outputs $v^*$ with the help of the party (or parties) with the input $v^*$ who can prove that they are sending the correct symbols w.r.t.\ $v^*$ thanks to the accumulation.

In $\ext$, we similarly decide with $\Pi_\ba$ if the parties should agree on some ``correct'' $v^*$ or on $\bot$. However, our $\Pi_\ba$ is binary, and we establish $v^*$ not with $\Pi_\ba$ but with $\Pi_\ca$. The input determination for $\Pi_\ba$ is not just ``input 1 to $\Pi_\ba$ iff you output $v^* \neq \bot$ from $\Pi_\ca$,'' as then $\Pi_\ba$ could output $1$ with just one honest party who knows $v^*$, and this party would be unable to convince the others to output $v^*$ due to the lack of an accumulation. Instead, a party who outputs $v^* \neq \bot$ from $\Pi_\ca$ inputs $v^*$ to $\rec$, and a party inputs 1 to $\Pi_\ba$ only if it terminates $\rec$. This way, if a party outputs 1 from $\Pi_\ba$, then everyone terminates $\rec$ and learns $v^*$ thanks to $\rec$'s totality. Note that $\rec$ is able to achieve totality because in it an honest party can only terminate after $t + 1$ honest parties know $v^*$.

Below, we prove $\ext$ secure. While doing so, we refer to some string $v^* \in \{0,1\}^\ell$ such that the honest parties output only $v^*$ or $\bot$ from $\Pi_\ca$. The bitstring $v^*$ is unique if some honest party outputs a bitstring from $\Pi_\ca$, and it can be chosen arbitrarily if the honest parties output only $\bot$ from $\Pi_\ca$. 

\begin{lemma} \label{onlyvstarlemma}
    In $\ext$, the string $v^*$ is the only $\rec$ input the honest parties can acquire.
\end{lemma}

\begin{proof}
    If an honest party inputs some $v \neq \bot$ to $\rec$, then it has output $v$ from $\Pi_\ca$, and therefore $v = v^*$ since $v^*$ is the only non-$\bot$ output any honest party can obtain from $\Pi_\ca$.
\end{proof}

We use Lemma \ref{onlyvstarlemma} in the rest of this section without explicitly referring to it. 

\begin{lemma} \label{extsomelemma}
    In $\ext$, some honest party terminates $\Pi_\ba$.
\end{lemma}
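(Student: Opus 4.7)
I would prove the stronger statement that every honest party terminates $\ba$, from which the lemma is immediate. The approach is to show that every honest party eventually inputs to $\ba$, and then invoke the liveness of $\ba$.

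Since every honest party acquires an input to $\ext$, each one inputs to $\wa$, and by the liveness of $\wa$ each honest party eventually outputs from $\wa$. By the weak consistency of $\wa$ (and the definition of $v^*$), every such output lies in $\{v^*, \bot\}$. Let $S$ denote the set of honest parties whose $\wa$-output is $\bot$; then at least $(n-t) - |S|$ honest parties output $v^*$ from $\wa$.

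I would then split into two cases according to whether $|S| \geq t+1$. If it is, then the $\geq t+1$ honest parties in $S$ all multicast $\bot$, so every honest party eventually receives $\bot$ from at least $t+1$ senders and inputs $0$ to $\ba$ (if it has not already input). Otherwise $|S| \leq t$, and so $(n-t) - |S| \geq n - 2t \geq t+1$ honest parties (using $t < \frac{n}{3}$) input $v^*$ to $\rec$. By Lemma~\ref{onlyvstarlemma}, $v^*$ is the only value any honest party can input to $\rec$, so $\rec$'s liveness forces every honest party to terminate $\rec$, after which any honest party that has not yet input to $\ba$ inputs $1$. In either case, every honest party inputs to $\ba$, and $\ba$'s liveness then ensures that every honest party terminates $\ba$.

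The only subtlety is the case split on $|S|$: it is chosen precisely so that either the ``$\bot$ from $t+1$ parties'' rule fires for every honest party, or $\rec$ has enough honest inputs for its liveness to drive every honest party to terminate it. I do not foresee any substantive obstacle beyond verifying this bookkeeping.
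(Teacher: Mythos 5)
Your case split is the same pigeonhole argument as the paper's: among the $n - t \geq 2t + 1$ honest parties, either $t+1$ output $\bot$ from $\wa$ (so everyone can input $0$ to $\ba$ via the $\bot$ multicasts) or $t+1$ output $v^*$ (so, by Lemma~\ref{onlyvstarlemma} and the liveness of $\rec$, everyone terminates $\rec$ and can input $1$). The one substantive difference is that the paper argues by contradiction, assuming no honest party ever terminates $\ba$, and that framing is doing real work which your direct argument skips. The paper's property definitions only guarantee liveness of $\wa$, $\rec$ and $\ba$ under the implicit assumption that the honest parties keep running them, and in $\ext$ a party stops running everything once it terminates. So your unconditional appeals to ``the liveness of $\wa$'' (and of $\rec$ and $\ba$) are not licensed as written: an honest party that terminates $\ext$ early could in principle deprive the others of messages they need. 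For the lemma as stated the fix is immediate --- if some honest party terminates $\ext$, it has already terminated $\ba$ and there is nothing to prove; otherwise no honest party ever stops participating and your argument goes through --- and this dichotomy is exactly what the paper's contradiction buys. Note also that your ``stronger statement'' that \emph{every} honest party terminates $\ba$ additionally needs the totality guarantee that $\ext$ demands of its $\ba$ instance (if one honest party terminates $\ba$, all do) to handle the early-termination case; the claim is true, but the paper deliberately states and proves only ``some.''
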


\begin{proof}
    For the sake of contradiction, consider an execution of $\ext$ where Lemma \ref{extsomelemma} fails. In this execution, the honest parties never terminate $\Pi_\ba$, and thus never terminate $\ext$. So, they all acquire inputs, run $\Pi_\ca$ with their inputs, and output from $\Pi_\ca$. As there are at least $2t + 1$ honest parties, either $t + 1$ honest parties output $v^*$ from $\Pi_\ca$, or $t + 1$ honest parties output $\bot$ from $\Pi_\ca$.
     \begin{itemize}
        \item If $t + 1$ honest parties output $v^*$ from $\Pi_\ca$, then because these parties input $v^*$ to $\rec$, the honest parties all terminate $\rec$. Hence, the honest parties can all input 1 to $\Pi_\ba$.
        \item If $t + 1$ honest parties output $\bot$ from $\Pi_\ca$, then because these parties multicast $\bot$, the honest parties all receive $\bot$ from $t + 1$ parties. Hence, the honest parties can all input 0 to $\Pi_\ba$.
    \end{itemize}
    No matter what, the honest parties all input to $\Pi_\ba$. So, contradictorily, they all terminate $\Pi_\ba$.
\end{proof}

\begin{theorem}
    The protocol $\ext$ achieves validity, consistency, intrusion tolerance and termination.
\end{theorem}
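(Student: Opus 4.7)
The plan is to prove the four properties in the order termination, consistency, intrusion tolerance, validity, as the termination argument already produces most of the case analysis needed for consistency and validity. The two workhorses will be the totality guarantees assumed on $\ba$ and on $\rec$ together with Lemmas \ref{onlyvstarlemma} and \ref{extsomelemma}. In particular, Lemma \ref{onlyvstarlemma} supplies the validity hypothesis for $\rec$ (the honest parties can only input $v^*$), so whenever any honest party outputs from $\rec$, the liveness/validity of $\rec$ combined with its totality will give every honest party the output $v^*$.

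For termination, I would start by invoking Lemma \ref{extsomelemma} to get some honest party terminating $\ba$, and then use totality of $\ba$ to conclude that every honest party terminates $\ba$. I would then split on the output of $\ba$. In the case where $\ba$ outputs $0$, every honest party immediately outputs $\bot$ and terminates. In the case where $\ba$ outputs $1$, I would use validity of $\ba$ to argue that at least one honest party input $1$, which by the protocol text means it terminated $\rec$ first. By totality of $\rec$ every honest party terminates $\rec$, and by validity of $\rec$ (applicable thanks to Lemma \ref{onlyvstarlemma}) each such output equals $v^*$; thus $y_i = v^* \neq \bot$ for every honest $P_i$ and every honest party outputs $v^*$. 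Consistency is then essentially a corollary: consistency of $\ba$ plus totality forces the same $\ba$-output across all honest parties, and the two cases above show that in one the common output is $\bot$ and in the other it is the single value $v^*$.

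For intrusion tolerance, I would observe that any non-$\bot$ honest output equals $v^*$, and that for $v^*$ to be defined as a non-$\bot$ value some honest party must have output $v^*$ from $\wa$; intrusion tolerance of $\wa$ (which the excerpt explicitly assumes) then gives that $v^*$ is an honest input. For validity, assume $v$ is the only honest input. Validity of $\wa$ gives that every honest party outputs $v$ from $\wa$, so no honest party multicasts $\bot$ in the $\wa$-branch, which means at most $t$ byzantine parties can send $\bot$ and therefore no honest party ever receives $\bot$ from $t+1$ parties. Consequently the only way an honest party inputs to $\ba$ is via terminating $\rec$; liveness of $\rec$ (all honest parties input $v$) guarantees this happens, so every honest party inputs $1$ to $\ba$, whence validity of $\ba$ makes $\ba$ output $1$ and everyone outputs $y_i = v$ as required.

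The subtlest step, and the one I would be most careful with, is the link from ``$\ba$ outputs $1$'' back to ``every honest party terminates $\rec$ with $v^*$''. It relies on a chain: validity of binary $\ba$ $\Rightarrow$ at least one honest party input $1$ $\Rightarrow$ that party terminated $\rec$ $\Rightarrow$ totality of $\rec$ propagates termination to all honest parties $\Rightarrow$ Lemma \ref{onlyvstarlemma} plus validity of $\rec$ fixes the output to $v^*$. The validity clause for $\rec$ requires $v^*$ to be the \emph{only} input honest parties can acquire, which is exactly what Lemma \ref{onlyvstarlemma} provides; this is precisely why the reconstruction primitive was stated with that specific validity formulation rather than a standard agreement one, and it is what justifies using $\rec$ rather than a generic broadcast inside $\ext$.
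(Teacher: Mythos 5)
Your proposal is correct and follows essentially the same route as the paper: Lemma \ref{extsomelemma} plus the totality of $\ba$ to get every honest party terminating $\ba$, a case split on the $\ba$ output in which Lemma \ref{onlyvstarlemma} together with the validity and totality of $\rec$ handles the output-$1$ branch, intrusion tolerance of $\wa$ for intrusion tolerance, and the chain ``no honest $\bot$-output from $\wa$ $\Rightarrow$ no honest $0$-input to $\ba$'' for validity. One small imprecision: $v^*$ is \emph{always} fixed as a bitstring (arbitrarily so when every honest $\wa$ output is $\bot$), so the justification for intrusion tolerance in that degenerate case is not that $v^*$ is ``not defined as a non-$\bot$ value'' but rather that no honest party then inputs to $\rec$, hence by the validity of $\rec$ no honest party can terminate $\rec$ or output anything other than $\bot$ from $\ext$ --- which is exactly how the paper argues it, and a link you already have available since you invoke that validity clause elsewhere.
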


\begin{proof} First, we prove that $\ext$ terminates with either $v^*$ or $\bot$ as the common output. By Lemma \ref{extsomelemma}, some honest party terminates $\Pi_\ba$. So, by the totality guarantee we have assumed for $\Pi_\ba$, the honest parties all terminate $\Pi_\ba$, even if they do not all acquire $\Pi_\ba$ inputs. If $\Pi_\ba$ outputs $0$, then the parties terminate $\ext$ with the output $\bot$. Meanwhile, if $\Pi_\ba$ outputs $1$, then some honest party must have input $1$ to $\Pi_\ba$ upon terminating $\rec$. So, by the validity and totality of $\rec$, the honest parties all terminate $\rec$ with the output $v^*$, and thus they all terminate $\ext$ with the output $v^*$.

Next, we prove validity. Suppose the honest parties run $\ext$ with a common input. Then, this input is the unique output they can obtain from $\Pi_\ca$; i.e.\ it is $v^*$. Since the honest parties do not output $\bot$ from $\Pi_\ca$, they do not multicast $\bot$, they do not receive $\bot$ from $t + 1$ parties, and they do not input $0$ to $\Pi_\ba$. This means that they can only output $1$ from $\Pi_\ba$ and output $v^*$ from $\ext$.
       
Finally, we prove intrusion tolerance. If some honest party outputs $v^*$ from $\Pi_\ca$, then $v^*$ is an honest input as $\Pi_\ca$ is intrusion tolerant. Hence, $\ext$ achieves intrusion tolerance by outputting either $v^*$ or $\bot$. Meanwhile, if the honest parties only output $\bot$ from $\Pi_\ca$, then they do not input to $\rec$. Hence, they cannot terminate $\rec$, which means that they can only output $\bot$ from $\ext$. \qedhere
\end{proof}

\paragraph*{Complexity of \textsf{EXT}} The round/message/communication complexities of $\ext$ are those of $\Pi_\ca$ and $\Pi_\ba$ added, plus (due to $\rec$ and the $\bot$ messages in $\ext$) a constant number of rounds, $\BO(n^2)$ messages and $\BO(\ell n + n^2)$ bits of communication.

\section{Statistical Security Against \texorpdfstring{$t < \frac{n}{3}$}{t < n/3} Corruptions} \label{statsec}

Before we continue, let us define the main tool we will use for statistical security.

\paragraph*{Almost Universal Hashing.} Let $h(k, m) : \{0,1\}^\kappa \times \{0,1\}^\ell \rightarrow \{0,1\}^\kappa$ be a keyed hash function that outputs a $\kappa$-bit hash for each $\kappa$-bit key $k$ and $\ell$-bit message $m$. Based on the definition in \cite{fh06}, we say that $h$ is $\varepsilon$-almost universal if $\Pr[h(k, a) = h(k, b)] \leq \varepsilon$ for any distinct $a, b \in \{0,1\}^\ell$ when the key $k$ is chosen uniformly at random from $\{0,1\}^\kappa$. In this section, we use a $\big(\frac{2^{-\lambda}}{n^2}\big)$-almost universal keyed hash function $h$ so that after less than $n^2$ hash comparisons, the probability that any collision has occurred is by the union bound less than $2^{-\lambda}$. We obtain $h$ by setting $\kappa = \ceil{\lambda + \log_2(\ell n^2) + 1}$ and using polynomials in $GF(2^\kappa)$. Further details on $h$ can be found in the \hyperref[almostunivappendix]{appendix}.

\subsection{Statistically Secure Reliable Agreement}

Let us warm up with the simple reliable agreement protocol $\sra$ below. In $\sra$, each party $P_i$ with the input $v_i$ chooses and multicasts a random key $k_i \in \{0, \dots, 2^{\kappa}-1\}$, establishes the joint key $k_{\{i,j\}} = k_i + k_j \bmod 2^\kappa$ with each party $P_j$ that multicasts a key $k_j$, and sends $h(k_{\{i,j\}}, v_i)$ to $P_j$ so that $P_j$ can check if $v_i$ equals its own input $v_j$ by checking if $h(k_{\{i,j\}}, v_i) = h(k_{\{i,j\}}, v_j)$.

For \emph{adversarial input tolerance} when the adversary can decide the parties' inputs while $\sra$ runs (e.g.\ by influencing a preceding protocol), the joint key $k_{\{i, j\}}$ of any honest $P_i$ and $P_j$ should be uniformly random in $\{0, \dots, 2^\kappa - 1\}$ \emph{given the parties' inputs} $(v_i, v_j)$. For this, we only let parties choose keys after they acquire inputs. WLOG, if $P_j$ acquires an input after $P_i$, then the key $k_{\{i, j\}}$ is uniformly random given $(v_i, v_j)$ as it is randomized by $k_j$, which is independent from $v_i$, $v_j$ and $k_i$.

\begin{algobox}[ who eventually acquires the input $v_i$]{Protocol $\sra$}
    \State \textit{\textcolor{red}{// Note: The party $P_i$ must wait until it acquires $v_i$ before running the code below}}
    \State let $A_i \gets \{P_i\}$, and let $(k_{\{i, 1\}}, \dots, k_{\{i, n\}}) \gets (\bot, \dots, \bot)$
    \State let $k_i$ be a uniformly random key in $\{0, \dots, 2^{\kappa} - 1\}$, and multicast $\msgpair{\KEY}{k_i}$

    \Upon{receiving some $\msgpair{\KEY}{k_j}$ from any other party $P_j$ for the first time}
        \State $k_{\{i, j\}} \gets (k_i + k_j) \bmod 2^\kappa$
        \State send $\msgpair{\HASH}{h(k_{\{i, j\}}, v_i)}$ to $P_j$
    \EndUpon

    \Upon{receiving some $\msgpair{\HASH}{z}$ from any other party $P_j$ for the first time, as soon as $k_{\{i, j\}} \neq \bot$}
        \State if $z = h(k_{\{i, j\}}, v_i)$, then add $P_j$ to $A_i$
        \State if $|A_i| = n - t$, then output $v_i$, and keep running
    \EndUpon
\end{algobox}

Other than the subtlety of key/input independence, the protocol $\sra$ is simple: A party waits (potentially forever) until it observes $n - t$ matching hashes, and outputs its input if it does so.

\begin{theorem} \label{srasec}
    Except with probability less than $2^{-\lambda - 1}$, $\sra$ achieves validity and consistency.
\end{theorem}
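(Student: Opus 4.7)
The plan is to show that except on a bad event of probability less than $2^{-\lambda-1}$ — namely, some honest-party pair with distinct inputs producing a hash collision under their joint key — both validity and consistency hold deterministically.

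I would first bound the bad event. For any two honest parties $P_i, P_j$ with $v_i \neq v_j$, I claim that the joint key $k_{\{i,j\}} = (k_i + k_j) \bmod 2^\kappa$ is uniform on $\{0, \dots, 2^\kappa - 1\}$ conditioned on $(v_i, v_j)$. The reason is that, WLOG, one of the two — say $P_j$ — acquires its input no earlier than the other, and the explicit rule forcing parties to wait for their input before sampling a key means that $P_j$'s fresh random $k_j$ is independent of $v_i$, $v_j$, and $k_i$, so summing it into $k_{\{i,j\}}$ fully randomizes the joint key from the adversary's viewpoint. By the $\big(2^{-\lambda}/n^2\big)$-almost-universal property, the probability that $h(k_{\{i,j\}}, v_i) = h(k_{\{i,j\}}, v_j)$ is thus at most $2^{-\lambda}/n^2$. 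Since there are fewer than $\binom{n}{2} < n^2/2$ honest pairs — and each pair exchanges at most one joint key — the union bound gives a total bad-event probability less than $2^{-\lambda - 1}$.

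Conditioning on the good event, validity is immediate: if all honest inputs equal some $v$, then every honest pair $(P_i, P_j)$ satisfies $h(k_{\{i,j\}}, v_i) = h(k_{\{i,j\}}, v_j)$ trivially (irrespective of the key), so each honest $P_i$ eventually places all $\geq n - t$ honest parties into $A_i$ and outputs $v$. For consistency, I would argue by contradiction: suppose honest $P_i, P_j$ output $v_i \neq v_j$. Then $|A_i|, |A_j| \geq n - t$, so $|A_i \cap A_j| \geq n - 2t \geq t + 1$, and in particular some honest $P_k$ lies in both. Honesty of $P_k$ means it sent $h(k_{\{i,k\}}, v_k)$ to $P_i$ and $h(k_{\{j,k\}}, v_k)$ to $P_j$, and these were accepted as matching. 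Since $v_i \neq v_j$, at least one of $v_i \neq v_k$ or $v_j \neq v_k$ must hold, which produces an honest-pair collision and contradicts the good event.

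The main obstacle I anticipate is the first step — carefully justifying that $k_{\{i,j\}}$ is uniform conditioned on the adaptively chosen inputs $(v_i, v_j)$. The entire argument rests on this, and it is where the "wait for your input before choosing your key" rule is essential: it guarantees that the later-inputting honest party contributes a fresh random summand that the adversary could not have adapted the inputs to, even under full information and adaptive corruption.
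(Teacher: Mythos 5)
Your proposal is correct and follows essentially the same route as the paper: the same union bound over honest pairs using the $\bigl(\tfrac{2^{-\lambda}}{n^2}\bigr)$-almost-universality and the key-after-input independence argument, the same trivial validity argument, and the same quorum-intersection argument through an honest party in $A_i \cap A_j$ for consistency. The careful justification that $k_{\{i,j\}}$ is uniform given $(v_i, v_j)$ is exactly the point the paper also makes (in the discussion preceding the protocol), so there is no gap.
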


\begin{proof}
    Each honest $P_i$ initializes a set of parties $A_i$ to keep track of the parties with the same input \linebreak as itself, initially including only itself in $A_i$. Then, $P_i$ generates a random key $k_i$, multicasts $\msgpair{\KEY}{k_i}$, and agrees on the joint random key $k_{\{i, j\}} = (k_i + k_j) \bmod 2^\kappa$ with each $P_j$ from which it receives $\msgpair{\KEY}{k_j}$, so that if $P_j$ is honest, then $k_{\{i, j\}}$ is independent from $(v_i, v_j)$ due to being randomized by both $P_i$ and $P_j$. When $P_i$ knows $k_{\{i, j\}}$, it sends $\msgpair{\HASH}{h(k_{\{i, j\}}, v_i)}$ (the hash of $v_i$ w.r.t.\ the key $k_{\{i, j\}}$) to $P_j$. Finally, $P_i$ adds $P_j$ to $A_i$ if from $P_j$ it receives the matching hash $\msgpair{\HASH}{h(k_{\{i, j\}}, v_i)}$. \begin{itemize}
        \item \textit{Validity:} If the honest parties have a common input $v^*$, then they send each other matching hashes. So, they all observe $n - t$ matching hashes, and output the common input $v^*$.
        \item \textit{Consistency:} Suppose two honest parties $P_i$ and $P_j$ respectively output their inputs $v_i$ and $v_j$. Then, when $P_i$ and $P_j$ have both output, $|A_i \cap A_j| \geq n - 2t \geq t + 1$ since $|A_i| \geq n - t$ and $|A_j| \geq n - t$, and thus $A_i \cap A_j$ contains an honest party $P_q$. From this, we get that $P_q$'s input $v_q$ equals $v_i$ unless $v_i$ and $v_q$ collide on the key $k_{\{i, q\}}$ (i.e.\ $v_i \neq v_q$ but $h(k_{\{i, q\}}, v_i) = h(k_{\{i, q\}}, v_q)$), and that $v_q = v_j$ unless $v_q$ and $v_j$ collide on the key $k_{\{q, j\}}$. Assuming no hash collisions, we have $v_i = v_q = v_j$. This means that consistency is achieved unless there exist two distinct honest parties $P_i$ and $P_j$ whose inputs $v_i$ and $v_j$ collide on the key $k_{\{i, j\}}$. The probability that there exists such a pair of parties is at most $\binom{n}{2} \cdot \frac{2^{-\lambda}}{n^2} < 2^{-\lambda - 1}$, by the union bound. \qedhere
    \end{itemize}
\end{proof}

\paragraph*{Complexity of \textsf{SRA}} The protocol $\sra$ consists of 2 rounds: one to establish joint keys, and one to exchange input hashes. Its message complexity is $\Theta(n^2)$, and its communication complexity is $\Theta(n^2\kappa) = \Theta(n^2(\lambda + \log (\ell n)))$ since each message carries either a $\kappa$-bit key or a $\kappa$-bit hash.

\subsection{Statistically Secure Crusader Agreement}

We are now ready for the more challenging $\ca_1$, in which we build upon the key and hash exchange mechanism of $\sra$, and use both $\sra$ and the reconstruction protocol $\rec$ as subprotocols.

\begin{algobox}[ who eventually acquires the input $v_i$]{Protocol $\ca_1$}
    \State join an instance of $\rec$ and an instance of $\sra$
    \State \textit{\textcolor{red}{// Note: The party $P_i$ must wait until it acquires $v_i$ before running the code below}}
    \State let $A_i, B_i, C_i \gets \{P_i\}, \{\}, \{\}$, and let $(k_{\{i, 1\}}, \dots, k_{\{i, n\}}) \gets (\bot, \dots, \bot)$
    \State let $k_i$ be a uniformly random key in $\{0, \dots, 2^{\kappa} - 1\}$, and multicast $\msgpair{\KEY}{k_i}$
    
    \Upon{receiving some $\msgpair{\KEY}{k_j}$ from any other party $P_j$ for the first time}
        \State $k_{\{i, j\}} \gets (k_i + k_j) \bmod 2^\kappa$
        \State send $\msgpair{\HASH}{h(k_{\{i, j\}}, v_i)}$ to $P_j$
    \EndUpon

    \Upon{receiving some $\msgpair{\HASH}{z}$ from any other party $P_j$ for the first time, as soon as $k_{\{i, j\}} \neq \bot$}
        \If{$z = h(k_{\{i, j\}}, v_i)$}
            \State add $P_j$ to $A_i$
            \State if $|A_i \cup C_i| = n - t$, then input $v_i$ to $\rec$ 
        \Else
            \State add $P_j$ to $B_i$
            \If{$|B_i| = t + 1$}
                \State multicast $\bot$
                \State output $\bot$ if you have not output before, and keep running
            \EndIf
        \EndIf
    \EndUpon

    \Upon{receiving $\bot$ from a party $P_j$}
        \State add $P_j$ to $C_i$
        \State if $|C_i| = t + 1$, then output $\bot$ if you have not output before, and keep running
        \State if $|A_i \cup C_i| = n - t$, then input $v_i$ to $\rec$ 
    \EndUpon

    \Upon{outputting $y$ from $\rec$}
        \State input $y$ to $\sra$
    \EndUpon

    \Upon{outputting $y$ from $\sra$}
        \State output $y$ if you have not output before, and keep running
    \EndUpon
\end{algobox}

In $\ca_1$, each party $P_i$ keeps track of three party sets: the set $A_i$ of parties from whom it receives matching hashes (as in $\sra$), as well as the set $B_i$ of parties from whom it receives non-matching hashes and the set $C_i$ of parties from whom it receives the message $\bot$, which a party multicasts when it observes $t + 1$ non-matching hashes and thus learns that it is okay to output $\bot$.

We design $\ca_1$ around the $k$-core predicate defined below (Definition \ref{corepreddef}).

\begin{definition} \label{corepreddef}
    We say that the $k$-core predicate holds in $\ca_1$ if for some $v \in \{0,1\}^\ell$ (a witness of the predicate), there are at least $k$ honest parties who acquire the input $v$ and never multicast $\bot$.
\end{definition}

In particular, we are interested in the $(t+1)$-core predicate due to the Lemmas \ref{predyeslemma} and \ref{prednolemma} below.

\begin{lemma} \label{predyeslemma}
    In $\ca_1$, if the $(t + 1)$-core predicate holds with a witness $v^*$ and no two honest parties' input hashes collide, then each honest party $P_i$ with the $\ca_1$ input $v_i$ lets $v_i$ be its $\rec$ input iff $v_i = v^*$.
\end{lemma}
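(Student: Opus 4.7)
The plan is to argue the biconditional by treating each direction separately, exploiting that $A_i$ and $C_i$ are monotonically increasing so that the trigger $|A_i \cup C_i| = n - t$ for inputting $v_i$ to $\rec$ fires if and only if $|A_i \cup C_i|$ eventually reaches $n - t$. Throughout, let $S$ denote a set of at least $t + 1$ honest parties witnessing the $(t+1)$-core of Definition \ref{corepreddef}: every member of $S$ has acquired input $v^*$ and never multicasts $\bot$.

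For the only-if direction I would argue by contraposition. Suppose $v_i \neq v^*$; then $P_i \notin S$, so all $|S| \geq t + 1$ members of $S$ are distinct from $P_i$. Each $P_j \in S$ never multicasts $\bot$, so $P_j$ never enters $C_i$. Moreover, $P_j$ sends $P_i$ the hash $h(k_{\{i,j\}}, v^*)$, which $P_i$ compares against $h(k_{\{i,j\}}, v_i)$; the no-collision hypothesis applied to the honest pair $P_i, P_j$ with distinct inputs forces these two hashes to differ, so $P_j$ lands in $B_i$ rather than $A_i$. Hence throughout the execution $|A_i \cup C_i| \leq n - |S| \leq n - (t+1) < n - t$, and the line that would cause $P_i$ to input $v_i$ to $\rec$ never triggers.

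For the if direction, suppose $v_i = v^*$, and as is standard when using a $k$-core predicate to establish liveness of weak agreement, assume every honest party eventually acquires an input. I will show that every honest party eventually lies in $A_i \cup C_i$, giving $|A_i \cup C_i| \geq n - t$ and hence the desired trigger. Any honest $P_j$ with $v_j = v^*$ sends $P_i$ a matching hash and thus enters $A_i$. For any honest $P_j$ with $v_j \neq v^*$, observe that $P_j \notin S$, so all of $S$ eventually exchanges \KEY{} and \HASH{} messages with $P_j$; by the no-collision hypothesis each $P_q \in S$ sends $P_j$ a hash differing from $P_j$'s own hash of $v_j$, so $P_j$ accumulates $|S| \geq t + 1$ entries in its own $B$-set, multicasts $\bot$, and thereby eventually enters $C_i$. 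Combined with $P_i \in A_i$ by initialization, this places all $\geq n - t$ honest parties into $A_i \cup C_i$, and $P_i$ inputs $v_i$ to $\rec$. The main subtlety in this direction is showing that non-core honest parties with a differing input actually multicast $\bot$, which is exactly why the core threshold is $t + 1$, matching the $|B_j| = t + 1$ multicast trigger, and why the no-collision hypothesis is needed on \emph{every} honest-honest key, not just those involving $P_i$.
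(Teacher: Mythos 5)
Your proof is correct and follows essentially the same route as the paper's: the paper partitions the honest parties into those with input $v^*$ and those without, uses the $t+1$ core parties (your $S$, its $H^*$) to force every honest party with a differing input to accumulate $t+1$ non-matching hashes and multicast $\bot$, and uses the same core set to cap $|A_i \cup C_i|$ at $n - |S| < n - t$ for such parties. Your explicit remarks on monotonicity of the trigger and on the assumption that all honest parties acquire inputs are consistent with how the paper applies the lemma in its liveness argument.
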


\begin{lemma} \label{prednolemma}
    In $\ca_1$, if the $(t + 1)$-core predicate does not hold and no two honest parties' input hashes collide, then at least $t + 1$ honest parties multicast $\bot$.
\end{lemma}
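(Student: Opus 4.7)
My plan is to proceed by contradiction. Assume fewer than $t+1$ honest parties multicast $\bot$; I will derive that the $(t+1)$-core predicate must then hold. Let $B$ denote the honest parties who multicast $\bot$ and $G$ the honest parties who acquire an input but never multicast $\bot$. Implicitly assuming (as the lemma is naturally intended, since all actions in $\wa_1$ require having acquired an input) that every honest party eventually acquires an input, we get $|B| + |G| = |H| \geq n - t \geq 2t + 1$, so under the contradiction hypothesis $|B| \leq t$ we have $|G| \geq t + 1$.

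The crux of the argument is to show, for each $P_i \in G$ with input $v_i$, that at least $|H| - t$ honest parties share the input $v_i$. Since $P_i$ never multicasts $\bot$, $|B_i|$ stays at most $t$ throughout, so $P_i$ receives non-matching $\HASH$ messages from at most $t$ distinct parties. On the other hand, any honest $P_j \neq P_i$ with input $v_j \neq v_i$ eventually sends $P_i$ the non-matching hash $h(k_{\{i,j\}}, v_j)$: both multicast $\KEY$ messages upon acquiring inputs, the adversary must deliver these, each computes the joint key $k_{\{i,j\}}$, and each sends a $\HASH$ message based on its own input. The no-collision hypothesis guarantees that these hashes differ when $v_i \neq v_j$. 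Hence at most $t$ honest parties have input other than $v_i$.

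To finish, I will show that all of $G$ shares a single common input. If two parties $P_i, P_{i'} \in G$ had distinct inputs $v_i \neq v_{i'}$, then by the previous paragraph the disjoint groups of honest parties with inputs $v_i$ and $v_{i'}$ would together contain at least $2(|H| - t)$ parties, forcing $|H| \leq 2t$ and contradicting $|H| \geq 2t + 1$. So all of $G$ has a common input $v^*$, and $|G| \geq t + 1$ honest parties acquire $v^*$ and never multicast $\bot$, witnessing the $(t+1)$-core predicate with witness $v^*$ and contradicting the lemma's hypothesis.

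The main obstacle is justifying that the $\KEY$/$\HASH$ exchange between two honest parties always completes in a way that exposes input disagreement. This reduces to reading off the $\wa_1$ pseudocode carefully and invoking (i) the adversary's obligation to deliver honest-to-honest messages, and (ii) the convention that honest parties keep running the protocol forever, so that the relevant \emph{upon} blocks fire for every ordered pair of honest parties. With that in hand, the counting above goes through cleanly, and the only other subtlety is the implicit input-acquisition assumption which I would flag explicitly at the start of the proof.
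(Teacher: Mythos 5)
Your proof is correct, and it takes a genuinely different route from the paper's. The paper argues the implication directly: it fixes the most common honest input $v^*$, notes that all but at most $t$ of its holders multicast $\bot$ (immediately from the failure of the predicate), and then shows that every honest party holding any other input $v'$ receives non-matching hashes from at least $\lceil \frac{n-t}{2}\rceil \ge t+1$ honest parties (via a case split on whether the $v'$-holders form at least half of the honest parties), hence multicasts $\bot$; summing gives at least $n-2t \ge t+1$ multicasters. You instead prove the contrapositive: from the invariant that a party which never multicasts $\bot$ keeps $|B_i|\le t$ forever, you deduce that its input is shared by all but at most $t$ honest parties, i.e.\ by a strict majority of them, so all $\ge t+1$ such parties share a single input and witness the $(t+1)$-core predicate. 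Both arguments hinge on the same protocol fact — every honest pair with differing inputs eventually produces a processed mismatch in the absence of collisions — but your behavioral partition ($\bot$-multicasters versus the rest) replaces the paper's partition by input value and avoids the $\lceil\frac{n-t}{2}\rceil$ computation, while the paper's direct version yields the slightly stronger count of $n-2t$ multicasters. Your explicit flagging of the implicit assumption that every honest party acquires an input is appropriate: the paper's proof relies on it just as silently, and the lemma is only invoked under the liveness hypothesis where that assumption holds.
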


The key observation is that in every execution of $\ca_2$, either the $(t + 1)$-core predicate holds with some witness $v^*$, or it does not. In the former case, by Lemma \ref{predyeslemma}, the ($t + 1$ or more) parties with the $\ca_1$ input $v^*$ let $v^*$ be their $\rec$ inputs, while the remaining parties do not input to $\rec$. Consequently, the following happens: Everyone outputs $v^*$ from $\rec$; everyone inputs $v^*$ to $\sra$; everyone outputs $v^*$ from $\sra$; each party outputs $v^*$ unless it has output $\bot$ earlier.

Meanwhile, in the latter case where the $(t + 1)$-core predicate does not hold, Lemma \ref{prednolemma} tells us that $t + 1$ parties multicast $\bot$. These parties' $\bot$ messages let every party $P_i$ obtain $|C_i| = t + 1$ and thus output $\bot$. However, this alone does not imply weak consistency. The parties can input inconsistent bitstrings to $\rec$, and therefore output anything from $\rec$. Still, we get weak consistency thanks to $\sra$. If two parties $P$ and $P'$ respectively output $v \neq \bot$ and $v' \neq \bot$ from $\sra$ and thus output $v$ and $v'$ from $\ca_1$, then $v = v'$ by the consistency of $\sra$. Hence, $\ca_1$ achieves weak consistency.

\begin{proof}[Proof of Lemma \ref{predyeslemma}]

Suppose the $(t + 1)$-core predicate holds with some witness $v^*$. Partition the honest parties into the set $H$ of honest parties with the input $v^*$, and the set $H'$ of honest parties with other inputs. Then, let $H^* \subseteq H$ be the set of the $t + 1$ or more honest parties in $H$ that never multicast $\bot$, and observe that $|H| \geq |H^*| \geq t + 1$ by definition. Now, we have the following: \begin{itemize}
    \item Each $P_i \in H'$ receives a non-matching hash from each $P_j \in H$, and thus adds $P_j$ to $B_i$. So, $P_i$ observes that $|B_i| \geq t + 1$, and thus multicasts $\bot$.
    \item Each $P_i \in H$ receives a matching hash from each $P_j \in H$ and receives $\bot$ from each $P_j \in H'$, which means that $P_i$ adds each party in $H$ to $A_i$ and each party in $H'$ to $C_i$. So, $P_i$ observes that $|A_i \cup C_i| \geq |H \cup H'| \geq n - t$, and thus inputs $v_i = v^*$ to $\rec$.
    \item For each $P_i \in H'$, the parties in $H^*$ send non-matching hashes to $P_i$, which means that $P_i$ does not add them to $A_i$, and the parties in $H^*$ do not multicast $\bot$, which means that $P_i$ does not add them to $C_i$ either. Hence, for $P_i$ it always holds that $|A_i \cup C_i| \leq n - |H^*| < n - t$, which means that $P_i$ does not input $v_i$ to $\rec$. \qedhere
\end{itemize}
    
\end{proof}
    
\begin{proof}[Proof of Lemma \ref{prednolemma}]
Let $v^*$ be the most common honest input (with arbitrary tie-breaking), and let $H$ be the set of honest parties with the input $v^*$. All but at most $t$ of the parties in $H$ multicast $\bot$. Else, $t + 1$ honest parties with the input $v^*$ would never multicast $\bot$, and the $(t + 1)$-core predicate would hold. Moreover, every honest party outside $H$ multicasts $\bot$. To see why, let us choose any $v' \neq v^*$, and see that every honest party with the input $v'$ multicasts $\bot$. Let $H'$ be the set of of honest parties with the input $v'$. There are at least $\ceil[big]{\frac{n - t}{2}}$ honest parties outside $H'$. This is because either \linebreak $|H'| \geq \ceil[big]{\frac{n - t}{2}}$, and so at least $|H| \geq |H'| \geq \ceil[big]{\frac{n - t}{2}}$ honest parties have the input $v^*$, or $|H'| \leq \ceil[big]{\frac{n - t}{2}} - 1$, and so at least $n - t - (\ceil[big]{\frac{n - t}{2}} - 1) \geq \ceil[big]{\frac{n - t}{2}}$ honest parties have inputs other than $v'$. So, assuming no hash collisions, the honest parties in $H'$ receive non-matching hashes from at least $\ceil[big]{\frac{n - t}{2}} \geq t + 1$ honest parties outside $H'$, and thus they multicast $\bot$. Finally, as all honest parties outside $H$ and all but at most $t$ of the parties in $H$ multicast $\bot$, at least $n - 2t \geq t + 1$ honest parties multicast $\bot$. \qedhere
    
\end{proof}

\begin{theorem} \label{watwosec}
    Except with probability less than $2^{-\lambda}$, $\ca_1$ achieves validity, weak consistency and liveness.
\end{theorem}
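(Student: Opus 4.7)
The plan is to isolate the randomness in a single ``no hash collisions'' event, and then reason deterministically by splitting on whether the $(t+1)$-core predicate holds, leveraging Lemmas~\ref{predyeslemma} and~\ref{prednolemma} together with the guarantees of the subprotocols $\rec$ and $\sra$. Concretely, I would let $E$ be the event that some joint key $k$ between two honest parties produces a hash collision, either inside $\wa_1$ itself or inside the embedded $\sra$ instance. As in the proof of Theorem~\ref{srasec}, each such joint key is uniformly random on $\{0, \dots, 2^{\kappa}-1\}$ and independent of the inputs the two parties feed into $h$, because each honest party chooses its half of the key after acquiring its input and independently of it. There are fewer than $n^2$ such honest-honest joint keys across $\wa_1$ and $\sra$ combined, so by the union bound and the $(2^{-\lambda}/n^2)$-almost universality of $h$ we get $\Pr[E] < 2^{-\lambda}$, and the remainder of the argument conditions on $\overline{E}$.

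I would then split on the $(t+1)$-core predicate. If it holds with witness $v^*$, Lemma~\ref{predyeslemma} implies that exactly the honest parties with $\wa_1$-input $v^*$ feed $v^*$ into $\rec$, and at least $t+1$ of them do so. The liveness and validity of $\rec$ then make every honest party terminate $\rec$ with output $v^*$, feed $v^*$ into $\sra$, and by $\sra$'s validity output $v^*$ from $\sra$; hence every honest $\wa_1$-output is either $v^*$ (via $\sra$) or $\bot$ (from an earlier $|B_i| \geq t+1$ or $|C_i| \geq t+1$ trigger), giving both liveness and weak consistency with common value $v^*$. If the predicate fails, Lemma~\ref{prednolemma} yields at least $t+1$ honest parties multicasting $\bot$, so every honest $P_i$ eventually observes $|C_i| \geq t+1$ and outputs (liveness), while weak consistency follows from the consistency of $\sra$: any non-$\bot$ $\wa_1$-output is an $\sra$-output, and $\sra$ delivers a single common value $y$ to all honest parties that output from it, so every honest $\wa_1$-output lies in $\{y, \bot\}$.

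Finally, I would check validity in isolation: if $v$ is the only honest input, then the honest parties exchange matching hashes among themselves, so every honest $P_i$ has $|B_i| \leq t$ and never multicasts $\bot$, and at most $t$ byzantine parties can contribute to $C_i$. Thus no honest party outputs $\bot$ and the $(t+1)$-core predicate holds with witness $v$, reducing to the positive case above and yielding every honest output equal to $v$. The main subtlety I anticipate is in the negative case: an honest party may input a ``wrong'' value to $\rec$ and see $\rec$ or the ensuing $\sra$ never terminate for it, but this does not break weak consistency because $\sra$'s consistency still forces any two non-$\bot$ outputs to coincide, and it does not break liveness because the $|C_i| \geq t+1$ trigger fires in a separate \textbf{upon} clause that is independent of $\rec$ or $\sra$ making progress.
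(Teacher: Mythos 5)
Your proposal is correct and follows essentially the same route as the paper: condition on a no-collision/$\sra$-success event of probability at least $1 - 2^{-\lambda}$, then case-split on the $(t+1)$-core predicate using Lemmas~\ref{predyeslemma} and~\ref{prednolemma}, with weak consistency coming from $\sra$'s consistency and validity reducing to the positive case. The only cosmetic difference is that you union-bound over all honest-honest joint keys in $\wa_1$ and $\sra$ at once, whereas the paper budgets $2^{-\lambda-1}$ to each; both give the stated bound.
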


\begin{proof} We prove $\ca_1$ secure assuming that $\sra$ achieves validity and consistency, and that no two honest parties' input hashes collide in $\ca_1$. These assumptions hold except with probability less than $2^{-\lambda}$ by the union bound since $\sra$ fails with probability less than $2^{-\lambda-1}$ and since the probability that any two honest parties' input hashes collide in $\ca_1$ is (as in $\sra$) less than $2^{-\lambda-1}$.

    \begin{itemize}
        \item \textit{Liveness:} If the $(t + 1)$-core predicate holds with some witness $v^*$, then by Lemma \ref{predyeslemma}, at least $t + 1$ honest parties input $v^*$ to $\rec$, and no honest parties input anything else to $\rec$. By the validity and liveness of $\rec$ and the validity of $\sra$, this is followed by the honest parties all outputting $v^*$ from $\rec$, inputting $v^*$ to $\sra$ and outputting $v^*$ from $\sra$. Hence, every honest party $P_i$ that does not output $\bot$ from $\ca_1$ after obtaining $|B_i| = t + 1$ or $|C_i| = t + 1$ outputs $v^*$ from $\ca_1$ after outputting $v^*$ from $\sra$. On the other hand, if the $(t + 1)$-core predicate does not hold, then by Lemma \ref{prednolemma}, at least $t + 1$ honest parties multicast $\bot$. This ensures that every honest party $P_i$ can obtain $|C_i| \geq t + 1$ and output $\bot$.
        \item \textit{Weak Consistency:} By the consistency of $\sra$, there exists some $y \in \{0,1\}^\ell$ such that every honest $\sra$ output equals $y$. Hence, in $\ca_1$, every honest party $P_i$ either outputs $\bot$ after obtaining $|B_i| = t + 1$ or $|C_i| = t + 1$, or outputs $y$ after outputting $y$ from $\sra$.
        \item \textit{Validity:} If the honest parties have a common input $v^*$, then they send each other matching hashes. So, they do not receive $t + 1$ non-matching hashes, do not multicast $\bot$, and do not receive $\bot$ from $t + 1$ parties. This means that they do not output $\bot$. Moreover, the $(t + 1)$-core predicate holds with the witness $v^*$, which as we argued while proving liveness leads to the honest parties all outputting $v^*$ or $\bot$, or just $v^*$ in this case as they cannot output $\bot$. \qedhere
    \end{itemize}
    
\end{proof}

\paragraph*{Complexity of \textsf{CA}\textsubscript{1}} The round complexity of $\ca_1$ is constant since the round complexities of $\rec$ and $\sra$ are constant. Meanwhile, we get the message complexity $\Theta(n^2)$ and the communication complexity $\Theta(\ell n + n^2\kappa) = \BO(\ell n + n^2(\lambda + \log(\ell n))) = \BO(\ell n + n^2(\lambda + \log n))$ by summing up the the messages sent in $\ca_1$ with the messages sent in $\rec$ and $\sra$. Note that $\BO(\ell n + n^2(\lambda + \log n + \log \ell)) = \BO(\ell n + n^2(\lambda + \log n))$ since $\ell \geq n^2 \implies n^2\log \ell = \BO(\ell n)$ while $\ell < n^2 \implies \log \ell = \BO(\log n)$.

\section{Perfect Security Against \texorpdfstring{$t \leq \frac{n}{3 + \varepsilon}$}{t ≤ n/(3 + ε)} Corruptions} \label{perfsec}

In this section, the parties no longer exchange keyed hashes of their inputs. Instead, for suitable choices of $\delta$, they encode their inputs with $(n, \delta)$-error correcting codes, and exchange indexed symbols of their inputs. These symbols sometimes collide: For any $m, m' \in \{0, 1\}^\ell$ with $\Enc_\delta(m) = (s_1, \dots, s_n)$ and $\Enc_k(m') = (s_1', \dots, s_n')$, there can exist up to $\delta - 1$ indices $j$ such that $s_j = s_j'$, even if $m \neq m'$.\footnote{$\Enc_\delta(m)$ and $\Enc_\delta(m')$ cannot have $\delta$ common symbols, as then these symbols could be decoded into both $m$ and $m'$.} That is why we design $\ca_2$ against $t \leq \frac{n}{3 + \varepsilon}$ faults. When $\varepsilon > 0$ is constant, doing so lets us set $\delta = \Theta(n)$ (for communication efficiency), and yet overcome the resulting collisions. 

Since $\varepsilon$ being arbitrarily large will sometimes be mathematically inconvenient, we will often use $\sigma = \min(1, \varepsilon)$ instead of $\varepsilon$. Note that this gives us $\frac{\varepsilon}{3 + \varepsilon} \geq \frac{\sigma}{4}$, which will be useful in some calculations.

\subsection{Perfectly Secure \texorpdfstring{$\ceil[\big]{\frac{8}{\sigma}}$}{ceil(8/σ)}-Crusader Agreement}

We will use the perfectly secure $\ceil[\big]{\frac{8}{\sigma}}$-crusader agreement protocol $\kca$ below as a preliminary input processing step in $\ca_2$ to bound the number of collisions each honest party has overcome. We obtain $\kca$ by adapting the first two phases of Chen's 4-phase COOL protocol \cite{chen21} to work in an asynchronous network. These phases result in 2-crusader agreement, which Chen then upgrades to crusader agreement. Likewise, we will upgrade $\kca$ to crusader agreement in $\ca_2$.

\begin{algobox}[ who eventually acquires the input $v_i$]{Protocol $\kca$}
    \State let $M_i^0, M_i^1, S_i^0, S_i^1  \gets \{\}, \{\}, \{\}, \{\}$
    \State let $(s_1, \dots, s_n) \gets \Enc_\delta(v_i)$, where $\delta = \ceil[\big]{\frac{\sigma(n - 3t)}{5}}$
    \State to each party $P_j$, send $\msgpair{\SYM}{(s_i, s_j)}$

    \Upon{receiving some $\msgpair{\SYM}{(s_j', s_i')}$ from a party $P_j$ for the first time}
        \State add $P_j$ to $M_i^1$ if $(s_j', s_i') = (s_j, s_i)$, and to $M_i^0$ otherwise
        \State if $|M_i^1| = n - 2t$ and $|M_i^0| < t + 1$, then multicast $\msgpair{\SUC}{1}$
        \State if $|M_i^0| = t + 1$ and $|M_i^1| < n - 2t$, then multicast $\msgpair{\SUC}{0}$
    \EndUpon

    \Upon{receiving some $\msgpair{\SUC}{b}$ from a party $P_j$ for the first time}
        \State add $P_j$ to $S_i^b$
    \EndUpon

    \When{$|M_i^1 \cap S_i^1| \geq n - 2t$}
        \State output $v_i$ if you have not output already
    \EndWhen

    \When{$|M_i^0 \cup S_i^0| \geq t + 1$}
        \State output $\bot$ if you have not output already
    \EndWhen
\end{algobox}

\begin{lemma} \label{wclemma}
    In $\kca$, if an honest $P_i$ outputs $v_i$, then at least $\frac{\sigma n}{8}$ honest parties have the input $v_i$.
\end{lemma}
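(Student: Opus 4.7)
The plan is to analyze the set $T := M_i^1 \cap S_i^1 \cap H$, where $H$ denotes the honest parties, and show that a $\sigma n/8$-fraction of $T$ must hold input $v_i$. Since $|M_i^1 \cap S_i^1| \geq n - 2t$ triggers the output and at most $t$ parties are byzantine, $|T| \geq n - 3t$. I would then partition $T$ by input value: for any honest $P_j \in T$ with $v_j = v \neq v_i$, the condition $P_j \in M_i^1$ forces both positions $i$ and $j$ to lie in $I(v, v_i) := \{k : \Enc_\delta(v)_k = \Enc_\delta(v_i)_k\}$, which has size at most $\delta - 1$, so at most $\delta - 2$ parties of $T$ can carry any given non-$v_i$ value. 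Writing $D$ for the number of distinct non-$v_i$ values appearing in $T$, this gives at least $(n - 3t) - D(\delta - 2)$ parties of $T$ with input $v_i$; the task reduces to upper-bounding $D$.

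To control $D$, I would invoke the $\SUC$-$1$ requirement. For every value $v$ represented in $T$ there is an honest $P_j$ that multicast $\msgpair{\SUC}{1}$, so eventually $|M_j^1 \cap H| \geq n - 3t$. Each such honest $P_k \in M_j^1$ must have $\Enc_\delta(v_k)_k = \Enc_\delta(v)_k$, so the set $F_v$ of honest positions whose ``vote'' supports $v$ satisfies $|F_v| \geq n - 3t$. The same holds for $v_i$, giving $K = D + 1$ distinct values each with at least $n - 3t$ honest supporters, while the code's collision bound gives $|F_v \cap F_{v'}| \leq \delta - 1$ for distinct $v \neq v'$. A Bonferroni inequality on $\bigcup_v F_v \subseteq [n]$ then reads $K(n - 3t) - \binom{K}{2}(\delta - 1) \leq n$, and after substituting $\delta - 1 \leq \sigma(n - 3t)/5$ together with $n - 3t \geq \sigma n/(3 + \sigma)$ it bounds $K$, and hence $D$. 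Plugging that bound back into the partition inequality should yield at least $\sigma n/8$ parties of $T$ with input $v_i$, from which the lemma follows.

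The hard part will be making the constants line up, since the choice $\delta = \lceil \sigma(n - 3t)/5 \rceil$ and the target $\sigma n/8$ (matching the $\lceil 8/\sigma \rceil$-weak agreement eventually achieved on top of $\kwa$) appear tightly calibrated. A plain Bonferroni bound on $K$ is fairly weak when $\sigma$ is small, so I expect the proof to proceed via a case split on whether $D$ is small (in which case the partition bound already forces enough $v_i$-holders directly) or large (in which case a counting or union-bound argument must drive a contradiction), and possibly also to exploit the trigger-time condition $|M_j^0| < t + 1$, which limits how many honest dissenters can have delivered to $P_j$ at the moment of its first $\msgpair{\SUC}{1}$ multicast. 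Closing the inequality chain at exactly $\sigma n/8$ is where I expect the bulk of the technical work.
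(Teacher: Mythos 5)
There is a genuine gap, and it sits exactly where you anticipated trouble: the bound on $D$. Your Bonferroni step is provably vacuous in the regime that matters. The quantity $K(n-3t) - \binom{K}{2}(\delta-1)$, viewed as a function of $K$, is a downward parabola whose maximum value is about $\frac{(n-3t)^2}{2(\delta-1)} \approx \frac{5(n-3t)}{2\sigma}$; when $t$ is close to $\frac{n}{3+\varepsilon}$ we have $n - 3t \approx \frac{\sigma n}{4}$ (for $\varepsilon \le 1$), so this maximum is about $\frac{5n}{8} < n$. Hence the inequality $K(n-3t) - \binom{K}{2}(\delta-1) \leq n$ holds for \emph{every} $K$ and yields no bound on $D$ at all (and even when it does bite, a downward parabola only excludes an interval of $K$, leaving a large-$K$ branch you would still have to kill separately). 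Since you need roughly $D \lesssim \frac{5}{2\sigma}$ for the partition count $(n-3t) - D(\delta-2) \geq \frac{\sigma n}{8}$ to close, the argument does not go through. A second, more structural issue is that you only count $v_i$-holders inside $T = M_i^1 \cap S_i^1 \cap H$; the lemma's witnesses in the paper's proof are \emph{not} confined to parties that sent $P_i$ matching symbols.

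The paper's route is different in kind. It fixes a core $C \subseteq M_i^1 \cap S_i^1 \cap H$ of size $n-3t$ (each member, having multicast $\msgpair{\SUC}{1}$, has at least $n-3t$ honest matching partners) and then applies a double-counting lemma on the ``matching'' graph over the honest parties: counting edges incident to $C$ in two ways shows that at least $n - (3 + \frac{2a}{1-a})t$ honest parties each have at least $a(n-3t)$ neighbors \emph{in $C$} (with $a = \frac{\sigma}{5}$, this set has size at least $\frac{\sigma n}{8}$). The punchline is a transitivity-through-common-neighbors observation your proposal lacks: if $P_k \in C$ matches both $P_i$ and $P_j$, then the $k^{\text{th}}$ symbols of $\Enc_\delta(v_i)$ and $\Enc_\delta(v_j)$ both equal the $k^{\text{th}}$ symbol of $\Enc_\delta(v_k)$, so $\delta$ common neighbors force $v_i = v_j$ outright --- no enumeration of distinct values, and no Bonferroni, is needed. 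If you want to salvage your framework, the common-neighbor argument is the ingredient to import; the collision sets $I(v,v_i)$ and $F_v$ alone are too weak to control the number of distinct values.
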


The $\ceil[\big]{\frac{8}{\sigma}}$-weak consistency of $\kca$ will follow from Lemma \ref{wclemma}, which holds because if an honest party $P_i$ outputs its input $v_i$, then for at least $\frac{\sigma n}{8} - 1$ other honest parties $P_j$ it holds that $\Enc_{\delta}(v_i)$ and $\Enc_{\delta}(v_j)$ have at least $\delta$ matching symbols, and thus that $v_i = v_j$. This fact does not follow from a typical quorum intersection argument. To prove it, we need the following lemma:

\begin{restatable}{lemma}{graphlemma} \label{graphlemma}
    Let $G = (V, E)$ be a graph with loops but no multiedges, with $|V| = n - t$ vertices where $n > 3t \geq 0$. For any $a \in (0, 1)$ and any set $C \subseteq V$ of $n - 3t$ vertices such that every $v \in C$ has a closed neighborhood (neighborhood including $v$) of size at least $n - 3t$, the set $D \subseteq V$ of vertices $v$ with at least $a(n - 3t)$ neighbors in $C$ (including $v$ if $v \in C$) is of size at least $n - (3 + \frac{2a}{1 - a})t$.
\end{restatable}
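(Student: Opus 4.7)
The plan is a clean double-counting argument. I would count the quantity $I$ defined as the number of ordered pairs $(u, v) \in V \times C$ such that $u$ lies in the closed neighborhood $N[v]$ of $v$ in $G$. Because $G$ is undirected (even with possible loops), $u \in N[v]$ is equivalent to $v \in N[u]$, so $I$ can be tallied by summing from either side of the bipartite incidence.

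From the $C$-side, the hypothesis $|N[v]| \geq n - 3t$ for every $v \in C$ immediately gives $I \geq |C|(n - 3t) = (n - 3t)^2$. From the $V$-side, $I = \sum_{u \in V} |N[u] \cap C|$, where each $u \in D$ contributes at most $|C| = n - 3t$, and, by the very definition of $D$, each $u \in V \setminus D$ contributes at most $a(n - 3t)$. This yields the upper bound $I \leq |D|(n - 3t) + (|V| - |D|) \cdot a(n - 3t) = (n - 3t)\bigl(|D|(1 - a) + a(n - t)\bigr)$.

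Combining the two bounds and dividing by $n - 3t > 0$ will give $n - 3t \leq |D|(1 - a) + a(n - t)$, which rearranges to $|D|(1 - a) \geq n(1 - a) - (3 - a)t$, i.e., $|D| \geq n - \frac{(3 - a)t}{1 - a}$. The elementary identity $\frac{3 - a}{1 - a} = 3 + \frac{2a}{1 - a}$ then produces the claimed bound $|D| \geq n - \bigl(3 + \frac{2a}{1 - a}\bigr)t$.

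I do not anticipate any substantive obstacle; the whole argument is one incidence count plus short algebra. The only point of care is the loop convention, making sure that ``$u \in N[v]$'' is symmetric in $u$ and $v$ and matches both the closed-neighborhood hypothesis on $C$ and the ``including $v$ if $v \in C$'' convention in the definition of $D$. It is worth emphasizing why a naive two-set intersection argument does \emph{not} suffice here: an individual $v \in C$ need not have many neighbors in $C$, so the proof must aggregate incidences across all of $C$ simultaneously, which is exactly what the double count accomplishes.
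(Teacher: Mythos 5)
Your proposal is correct and is essentially the same double-counting argument as the paper's proof: both count vertex--closed-neighborhood incidences between $V$ and $C$, lower-bound the count via the hypothesis on $C$ and upper-bound it by splitting $V$ into $D$ and $V \setminus D$, then solve for $|D|$. The algebra and the resulting bound match exactly.
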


Since Lemma \ref{graphlemma} is a variant of ``Lemma 8'' in \cite{chen2020} (the full arXiv version of Chen's COOL protocol) with adjusted parameters and a slightly different statement, we provide its proof in the \hyperref[kwaappendix]{appendix}. 

Below, for any honest parties $P_i$ and $P_j$ where $\Enc_\delta(v_i) = (s_1, \dots, s_n)$ and $\Enc_\delta(v_i) = (s_1, \dots, s_n)$, we say that  $P_i$ and $P_j$ are matching if $(s_j', s_i') = (s_j, s_i)$, and non-matching otherwise.

\begin{proof}[Proof of Lemma \ref{wclemma}]
    Suppose some honest party $P_i$ outputs its input $v_i$ from $\kca$. Let $H$ be any set of $n - t$ honest parties that includes $P_i$, and form a graph $G$ with the parties in $H$ as vertices, with an edge between every $P_j, P_k \in H$ that are matching, including a loop edge at every $P_j \in H$.

    Since $P_i$ outputs $v_i$, we have $|M_i^1 \cap S_i^1| \geq n - 2t$, and thus $|M_i^1 \cap S_i^1 \cap H| \geq n - 3t$. Therefore, we can let $C$ be any subset of $M_i^1 \cap S_i^1 \cap H$ of size $n - 3t$. Every $P_j \in C$ is an honest party matching $P_i$ that has multicast $\msgpair{\SUC}{1}$, which means that $|M_j^1| \geq n - 2t$ and thus that $|M_j^1 \cap H| \geq n - 3t$. In other words, every $P_j$ in $C$ has at least $|M_j^1 \cap H| \geq n - 3t$ neighbors in $G$.

    Let $D$ be the set of parties in $H$ which in $G$ have at least $\frac{\sigma}{5}(n - 3t)$ neighbors in $C$. By Lemma \ref{graphlemma}, we have $|D| \geq n - (3 + \frac{2\sigma/5}{1 - \sigma/5})t$, which is at least $n - (3 + \frac{2\sigma/5}{4/5})t = n - (3 + \frac{\sigma}{2})t$ since $\sigma \leq 1$. Now, since $n = bt$ for some $b \geq 3 + \sigma$, we have $\frac{|D|}{n} \geq \frac{bt - (3 + \sigma / 2)t}{bt} = \frac{b - (3 + \sigma / 2)}{b}$. This fraction is increasing for $b \in [3 + \sigma, \infty)$, which means that $\frac{|D|}{n} \geq \frac{b - (3 + \sigma / 2)}{b} \geq \frac{3 + \sigma - (3 + \sigma/2)}{3 + \sigma} = \frac{\sigma/2}{3 + \sigma} \geq \frac{\sigma / 2}{4} = \frac{\sigma}{8}$. So, $|D| \geq \frac{\sigma n}{8}$.

    Finally, for each $P_j \in D \setminus \{P_i\}$, let $C_j$ be the set of $P_j$'s neighbors in $C$, including $P_j$ if $P_j \in C$. Each $P_k \in C_j$ with the input $v_k$ is matching with both $P_i$ and $P_j$, which (even if $P_k \in \{P_i, P_j\}$) means that the $k^{\text{th}}$ symbol of $\Enc_\delta(v_k)$ is equal to the $k^{\text{th}}$ symbols of both $\Enc_\delta(v_i)$ and $\Enc_\delta(v_j)$. Since $|C_j| \geq \frac{\sigma}{5}(n - 3t)$ implies that $|C_j| \geq \ceil[\big]{\frac{\sigma(n - 3t)}{5}} = \delta$, we conclude that there are at least $\delta$ indices $k \in [n]$ such that $\Enc_\delta(v_i)$ and $\Enc_\delta(v_j)$ have matching $k^{\text{th}}$ symbols, and thus that $v_i = v_j$. So, there are at least $|(D \setminus \{P_i\}) \cup \{P_i\}| \geq |D| \geq \frac{\sigma n}{8}$ honest parties with the input $v_i$.
\end{proof}

\begin{theorem}
    The protocol $\kca$ achieves liveness, validity and $\ceil[\big]{\frac{8}{\sigma}}$-weak consistency.
\end{theorem}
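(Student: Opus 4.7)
My plan is to first establish that every honest party eventually multicasts exactly one $\msgpair{\SUC}{\cdot}$ message, and then derive liveness, validity, and $\ceil[\big]{8/\sigma}$-weak consistency from this structural fact.

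\emph{Exactly one SUC per honest party.} To prove this, I would observe that since $P_i$ sends SYM to itself (matching trivially) and every honest party's SYM eventually arrives, $|M_i^0| + |M_i^1| \geq n - t$ eventually. Combining this with the pigeonhole observation that $|M_i^1| < n - 2t$ together with $|M_i^0| < t + 1$ forces $|M_i^0| + |M_i^1| \leq n - t - 1$, one of the two thresholds must eventually be crossed. I would then argue that at the very first step at which a threshold is crossed, the other quantity is still strictly below its own threshold (otherwise that one would have been crossed earlier), so the corresponding guard fires exactly once; no subsequent SUC is multicast, because after the first crossing the strict-inequality side-condition of the other guard is permanently violated.

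\emph{Liveness and validity.} For liveness, I would fix an honest $P_i$ and define (in the eventual state of the execution) $A$ to be the set of honest $P_j$ with $P_j \notin M_i^0 \cup S_i^0$. The key observation is the containment $A \subseteq M_i^1 \cap S_i^1$: any such $P_j$ is honest, so its SYM eventually matches at $P_i$ (else $P_j \in M_i^0$), placing $P_j$ in $M_i^1$, and the single SUC message $P_j$ multicasts cannot be $\msgpair{\SUC}{0}$ (else $P_j \in S_i^0$), hence it is $\msgpair{\SUC}{1}$ and $P_j \in S_i^1$. Then I would case-split: if $|M_i^0 \cup S_i^0|$ ever reaches $t + 1$, $P_i$ outputs $\bot$; otherwise $|M_i^0 \cup S_i^0| \leq t$ forever, so $|A| \geq (n - t) - t = n - 2t$ and $|M_i^1 \cap S_i^1| \geq n - 2t$, causing $P_i$ to output $v_i$. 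For validity, with a common honest input $v^*$ all honest pairs send matching SYM, so the honest contribution to $M_i^0$ is empty, $|M_i^0| \leq t$ throughout, no honest party multicasts $\msgpair{\SUC}{0}$, and the $\bot$ guard never fires; the liveness case analysis then yields the output $v^*$.

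\emph{Weak consistency and main obstacle.} For $\ceil[\big]{8/\sigma}$-weak consistency, I would invoke Lemma~\ref{wclemma} directly: each distinct non-$\bot$ honest output is the input of at least $\sigma n / 8$ honest parties, and since there are only $n - t < n$ honest parties in total, the number of distinct non-$\bot$ outputs is strictly less than $8 / \sigma$ and thus at most $\ceil[\big]{8/\sigma}$. The main obstacle is the liveness argument: without the clean set $A$ and the containment $A \subseteq M_i^1 \cap S_i^1$, one is tempted to invoke graph-based quorum-intersection arguments about the matching structure, and such arguments can fail whenever an honest party's matching neighbourhood is too small. Identifying $A$ collapses the otherwise delicate liveness step to an elementary counting bound.
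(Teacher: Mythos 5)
Your proof is correct and follows essentially the same route as the paper's: liveness and validity via the observation that every honest party eventually lands in $(M_i^1 \cap S_i^1) \cup (M_i^0 \cup S_i^0)$ (your set $A$ is just a repackaging of this pigeonhole step), and $\ceil[\big]{\frac{8}{\sigma}}$-weak consistency directly from Lemma~\ref{wclemma}. Your explicit argument that each honest party multicasts exactly one $\SUC$ message is a slightly more careful justification of a step the paper treats implicitly (``if $\dots$ holds first''), but it is not a different approach.
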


\begin{proof}
    In $\kca$, each honest party $P_i$ tracks the set $M_i^1$ of parties from which it receives matching symbol pairs, the set $M_i^0$ of parties from which it receives non-matching symbol pairs, the set $S_i^1$ of parties from which it receives the positive success indicator $\msgpair{\SUC}{1}$, and the set $S_i^0$ of parties from which it receives the negative success indicator $\msgpair{\SUC}{0}$.

    An honest party $P_i$ receives a symbol pair from every honest $P_j$, and thus adds every honest $P_j$ to either $M_i^1$ or $M_i^0$. Therefore, eventually $|M_i^1 \cup M_i^0| \geq n - t$ holds, and hence either $|M_i^1| \geq n - 2t$ holds or $|M_i^0| \geq t + 1$ holds. If $|M_i^1| \geq n - 2t$ holds first, then $P_i$ multicasts $\msgpair{\SUC}{1}$. Otherwise, if $|M_i^0| \geq t + 1$ holds first, then $P_i$ multicasts $\msgpair{\SUC}{0}$.

    Since the honest parties all multicast success indicators, an honest party $P_i$ adds every honest party $P_j$ to not just one of $M_i^1$ and $M_i^0$ but also to one of $S_i^1$ or $S_i^0$ after receiving $P_j$'s success indicator, which means that eventually $|(M_i^1 \cap S_i^1) \cup (M_i^0 \cup S_i^0)| \geq n - t$. So, eventually either $|M_i^1 \cap S_i^1| \geq n - 2t$ holds and $P_i$ outputs $v_i$, or $|M_i^0 \cup S_i^0| \geq t + 1$ holds and $P_i$ outputs $\bot$. Either way, we have liveness. 
    
    As for validity, suppose the honest parties have a common input $v^*$. Then, they all match each other, which gives us the chain of implications that they do not add each other to their $M^0$ sets, do not obtain $M^0$ sets of size at least $t + 1$, do not multicast $\msgpair{\SUC}{0}$, and do not add each other to their $S^0$ sets either. Finally, as an honest party $P_i$ never places honest parties in $M_i^0 \cup S_i^0$, it never obtains $|M_i^0 \cup S_i^0| \geq t + 1$ and thus never outputs $\bot$, which means that it must output its own input $v^*$. \qedhere

    Lastly, $\ceil[\big]{\frac{8}{\sigma}}$-weak consistency follows easily from Lemma \ref{wclemma}. By the lemma, every honest non-$\bot$ output $y$ implies the existence of at least $\frac{\sigma n}{8}$ honest parties with the input $y$. Since there are at most $n$ honest parties, there is no room for more than $\frac{n}{(\sigma n) / 8} = \frac{8}{\sigma}$ distinct non-$\bot$ outputs.
\end{proof}

\paragraph*{Complexity of \textsf{KCA}} The protocol $\kca$ consists of 2 rounds: one for symbol exchanging, and one for success indicator exchanging. Its message complexity is $\Theta(n^2)$. As for the communication complexity, since we use $(n, \delta)$-error correction with $\delta = \ceil[\big]{\frac{\sigma(n - 3t)}{5}} \geq \frac{\sigma n}{5} \cdot \frac{n - 3t}{n} \geq \frac{\sigma n}{5} \cdot \frac{\varepsilon}{3 + \varepsilon} \geq \frac{\sigma n}{5} \cdot \frac{\sigma}{4} = \frac{\sigma^2 n}{20}$, we have $\frac{n}{\delta} = \BO(\sigma^{-2})$, which causes each symbol to be of size $\BO\bigl(\frac{\ell}{n \cdot \min(1, \varepsilon^2)} + \max\bigl(1, \log \frac{1}{\varepsilon}\bigr) \bigr)$ and thus causes the communication complexity to be $\BO\bigl(\frac{\ell n}{\min(1, \varepsilon^2)} + n^2\max\bigl(1, \log \frac{1}{\varepsilon}\bigr) \bigr)$.

\paragraph*{Discussion.} The protocol $\kca$ resembles the first two synchronous rounds of COOL \cite{chen21}, though in COOL after these rounds a party $P_i$ outputs $v_i$ if $|M_i^1 \cap S_i^1|\ \geq n - t$, and $\bot$ otherwise. For liveness in asynchrony, we reduce the $n - t$ threshold to $n - 2t$, and compensate by requiring $t \leq \frac{n}{3 + \varepsilon}$ to still achieve $k_\varepsilon$-weak consistency for some $k_\varepsilon \geq 1$ that depends only on $\varepsilon$. Comparable but different adaptations for asynchrony can be found in A-COOL \cite{lichen21}, which sticks closer to COOL's design and therefore only tolerates $t < \frac{n}{5}$ faults. Note that COOL-based reliable broadcast protocols \cite{long22, chen24} can keep the threshold $n - t$ and thus the resilience $t < \frac{n}{3}$ despite asynchrony since reliable broadcast requires the parties to output only if the sender is honest, which intuitively corresponds to the case where the parties run $\kca$ with \mbox{a common input and thus all add each other to their $M^1$ and $S^1$ sets.}

\subsection{Perfectly Secure Reliable Agreement}

The assumption $t \leq \frac{n}{3 + \varepsilon}$ leads to the simple perfectly secure reliable agreement protocol $\pra$ below where each party $P_i$ with the input $v_i$ computes $(s_1, \dots, s_n) \gets \Enc_{n - 3t}(v_i)$, multicasts $s_i$, and outputs $v_i$ after from $n - t$ parties $P_j$ it receives the matching symbol $s_j$. Note that each symbol exchanged in $\pra$ is of size $\BO\bigl(\frac{\ell}{n \cdot \min(1, \varepsilon)} + \max\bigl(1, \log \frac{1}{\varepsilon}\bigr) \bigr)$ as $t \leq \frac{n}{3 + \varepsilon}$ implies $n - 3t \geq \frac{\varepsilon n}{3 + \varepsilon} = \Omega(\min(1, \varepsilon)n)$.

\begin{algobox}[ who eventually acquires the input $v_i$]{Protocol $\pra$}
    \State let $A_i \gets \{P_i\}$
    \State let $(s_1, \dots, s_n) \gets \Enc_{n - 3t}(v_i)$
    \State multicast $\msgpair{\SYM}{s_i}$

    \Upon{receiving some $\msgpair{\SYM}{s_j'}$ from any other party $P_j$ for the first time}
        \State if $s_j' = s_j$, then add $P_j$ to $A_i$
        \State if $|A_i| = n - t$, then output $v_i$, and keep running
    \EndUpon
\end{algobox}

\begin{theorem} \label{prasec}
    The protocol $\pra$ achieves validity and consistency.
\end{theorem}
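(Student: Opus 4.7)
The plan is to prove validity and consistency for $\pra$ via a standard quorum-intersection argument combined with the property of $(n, n-3t)$-error correcting codes noted in the preliminaries, namely that if two encodings agree on $n - 3t$ or more indices, then the underlying messages are equal.

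For validity, I would observe that when all honest parties share a common input $v^*$, each honest $P_i$ computes the same codeword $\Enc_{n-3t}(v^*) = (s_1, \dots, s_n)$. So every honest $P_j$ multicasts the correct symbol $s_j$, which matches $P_i$'s locally computed $j$-th symbol. Since there are at least $n - t$ honest parties (including $P_i$ itself), $P_i$ eventually adds all of them to $A_i$, reaches $|A_i| \geq n - t$, and outputs $v^* = v_i$.

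For consistency, suppose honest parties $P_i$ and $P_j$ output $v_i$ and $v_j$ respectively. At the moment of output each has $|A_i|, |A_j| \geq n - t$, so by inclusion-exclusion $|A_i \cap A_j| \geq n - 2t$, and removing the at most $t$ byzantine parties yields at least $n - 3t$ honest parties $P_k$ in $A_i \cap A_j$. For every such honest $P_k$, the single symbol $s_k$ that $P_k$ multicast (with respect to its own input $v_k$) was accepted by both $P_i$ and $P_j$, which means it coincides with the $k$-th symbol of both $\Enc_{n-3t}(v_i)$ and $\Enc_{n-3t}(v_j)$. Thus $\Enc_{n-3t}(v_i)$ and $\Enc_{n-3t}(v_j)$ agree on at least $n - 3t$ indices, and the code property recalled in the preliminaries forces $v_i = v_j$.

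The only mild subtlety, rather than a real obstacle, is to invoke the code-matching observation correctly: the $n - 3t$ matching honest indices in $A_i \cap A_j$ provide $n - 3t = k$ agreeing coordinates between two codewords of an $(n, n - 3t)$-code, which is exactly the threshold at which two distinct messages cannot both decode consistently with the symbols restricted to those indices. Everything else is routine counting using $n > 3t$ (guaranteed by $t \leq \frac{n}{3 + \varepsilon}$), and liveness is not part of the claim so no additional argument is needed.
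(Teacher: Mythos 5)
Your proof is correct and follows essentially the same route as the paper's: validity by noting all honest parties exchange matching symbols of the common codeword, and consistency by the quorum intersection $|A_i \cap A_j| \geq n - 2t$, yielding $n - 3t$ honest parties whose symbols witness $n - 3t$ agreeing coordinates of $\Enc_{n-3t}(v_i)$ and $\Enc_{n-3t}(v_j)$, whence $v_i = v_j$ by the code property from the preliminaries. No gaps.
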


\begin{proof} If the honest parties have a common input $v^*$, then they all send each other matching symbols. So, every honest $P_i$ adds every honest $P_j$ to $A_i$, obtains $|A_i| = n - t$, and outputs $v_i = v^*$. So, we have validity. Meanwhile, for consistency, observe that if two honest parties $P_i$ and $P_j$ output $v_i$ and $v_j$, then $|A_i| = n - t$ and $|A_j| = n - t$, which implies $|A_i \cap A_j| \geq n - 2t$. So, there are $n - 3t$ honest parties $P_k \in A_i \cap A_j$ who sent both $P_i$ and $P_j$ matching symbols, i.e.\ $n - 3t$ honest parties $P_k$ such that $k^{\text{th}}$ symbol of $\Enc_{n - 3t}(v_k)$ matches the $k^{\text{th}}$ symbols of both $\Enc_{n - 3t}(v_i)$ and $\Enc_{n - 3t}(v_j)$. This means that $\Enc_{n - 3t}(v_i)$ and $\Enc_{n - 3t}(v_j)$ have $n - 3t$ matching symbols, and thus that $v_i = v_j$. \qedhere

\end{proof}

\paragraph*{Complexity of \textsf{PRA}} The protocol $\pra$ consists of 1 symbol exchange round, $n^2$ symbol messages, and $\BO\bigl(\frac{\ell n}{\min(1, \varepsilon)} + n^2\max\bigl(1, \log \frac{1}{\varepsilon}\bigr) \bigr)$ bits of communication.

\subsection{Perfectly Secure Crusader Agreement}

We are finally ready for the perfectly secure crusader agreement protocol $\ca_2$.

\begin{algobox}[ who eventually acquires the input $v_i$]{Protocol $\ca_2$}
    \State join an instance of $\kca$, an instance of $\rec$ and an instance of $\pra$
    \State input $v_i$ to $\kca$, and let $z_i$ be the output you obtain from it
    \State if $z_i = \bot$, then multicast $\msgpair{\SYM}{\bot}$, multicast $\bot$ and output $\bot$
    \State \textit{\textcolor{red}{Run the code below iff $z_i \neq \bot$}}
    \State let $A_i, B_i, C_i \gets \{P_i\}, \{\}, \{\}$
    \State let $(s_1, \dots, s_n) \gets \Enc_\delta(z_i)$, where $\delta = \ceil[\big]{\frac{\sigma(n - 3t)}{16}}$
    \State multicast $\msgpair{\SYM}{s_i}$

    \Upon{receiving some $\msgpair{\SYM}{s_j'}$ from any other party $P_j$ for the first time}
        \If{$s_j' \neq s_j$}
            \State add $P_j$ to $A_i$
            \State if $|A_i \cup C_i| = n - t$, then input $v_i$ to $\rec$ 
        \Else
            \State add $P_j$ to $B_i$
            \If{$|B_i| = t + 1$}
                \State multicast $\bot$
                \State output $\bot$ if you have not output before, and keep running
            \EndIf
        \EndIf
    \EndUpon

    \Upon{receiving $\bot$ from a party $P_j$}
        \State add $P_j$ to $C_i$
        \State if $|C_i| = t + 1$, then output $\bot$ if you have not output before, and keep running
        \State if $|A_i \cup C_i| = n - t$, then input $v_i$ to $\rec$ 
    \EndUpon

    \Upon{outputting $y$ from $\rec$}
        \State input $y$ to $\pra$
    \EndUpon

    \Upon{outputting $y$ from $\pra$}
        \State output $y$ if you have not output before, and keep running
    \EndUpon
\end{algobox}

The protocol $\ca_1$ uses the same design as $\ca_1$, with two changes: \begin{itemize}
    \item We replace the key and hash exchange of $\ca_1$ with the exchange of symbols, and accordingly use the symbol-based reliable agreement protocol $\pra$ instead of the hash-based $\sra$.
    \item The protocol $\ca_2$ begins with a new input processing step where the parties run $\kca$ first, and then run the rest of the protocol with their $\kca$ outputs instead of their raw inputs. The fact that $\kca$ only permits at most $\ceil[\big]{\frac{8}{\sigma}}$ distinct honest non-$\bot$ outputs lets us bound the number of symbol collisions in $\ca_2$. If an honest party outputs $\bot$ from $\kca$, then it outputs $\bot$ from $\ca_2$, and tells everyone to also output $\bot$ by multicasting $\msgpair{\SYM}{\bot}$ and $\bot$.
\end{itemize}

\begin{theorem} \label{wathreesec}
    The protocol $\ca_2$ achieves validity, weak consistency and liveness.
\end{theorem}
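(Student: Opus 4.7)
The plan is to mirror the structure of Theorem \ref{watwosec} for $\wa_1$, handling the three properties in increasing order of difficulty.

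Weak consistency is immediate from $\pra$'s consistency. Every honest $\wa_2$ output is either $\bot$ or an output passed up from $\pra$, and by Theorem \ref{prasec} all honest $\pra$ outputs equal some common $y$; hence every honest $\wa_2$ output lies in $\{y, \bot\}$.

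For validity, suppose the honest parties share a common input $v$. By $\kwa$'s validity every honest $P_i$ obtains $z_i = v$ from $\kwa$ and encodes $v$. Honest parties then exchange matching symbols and no honest party multicasts $\bot$ or $\msgpair{\SYM}{\bot}$, so $|B_i|, |C_i| \leq t$ throughout and no honest party outputs $\bot$. Each honest $P_i$ eventually accumulates $|A_i| \geq n-t$ from the honest matching symbols, inputs $v$ to $\rec$, outputs $v$ from $\rec$ (via $\rec$'s validity and liveness), inputs $v$ to $\pra$, and outputs $v$ from $\pra$ and hence from $\wa_2$.

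For liveness, I would lift the $(t+1)$-core argument of Theorem \ref{watwosec} by redefining the predicate with respect to $\kwa$ outputs: it holds iff some $v^* \in \{0,1\}^\ell$ is the $\kwa$ output of at least $t+1$ honest parties who never multicast $\bot$ in $\wa_2$. I would then prove analogs of Lemmas \ref{predyeslemma} and \ref{prednolemma}. When the predicate holds with witness $v^*$, the core parties all input $v^*$ to $\rec$, and by the validity, liveness and totality of $\rec$ together with $\pra$'s validity, every honest party outputs $v^*$ via $\pra$ unless it has earlier output $\bot$ via $|B_i| = t+1$ or $|C_i| = t+1$; either way it outputs. When the predicate fails, at least $t+1$ honest parties multicast $\bot$, so every honest $P_i$ eventually gets $|C_i| \geq t+1$ and outputs $\bot$.

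The hard part will be the collision control in these two lemmas. Unlike the random-hash argument in $\wa_1$, in $\wa_2$ two distinct $\kwa$ outputs can share up to $\delta - 1$ encoded symbols, so an off-core honest party with $z_i \neq v^*$ could place some core members in $A_i$ via spurious matches. This is precisely where the $\kwa$ preamble pays off: its $\ceil{8/\sigma}$-weak consistency caps the number of distinct non-$\bot$ honest $\kwa$ outputs at $\ceil{8/\sigma}$, and Lemma \ref{wclemma} backs each such output with at least $\sigma n / 8$ honest parties. A careful tally of collisions and of honest party counts across the at most $\ceil{8/\sigma}$ distinct outputs, combined with the choice $\delta = \ceil{\sigma(n-3t)/16}$, is what should make both directions of the core dichotomy go through; given the tightness of the parameters, this is where most of the technical work of the proof will lie.
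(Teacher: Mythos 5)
Your treatment of weak consistency and validity is correct and matches the paper's. The gap is in liveness: you propose to keep the $(t+1)$-core predicate (merely redefined over $\kwa$ outputs), but that dichotomy does not survive the symbol collisions, and no ``careful tally'' rescues it at that threshold. Concretely, in the ``predicate holds'' branch with a core set $H^*$ of only $t+1$ parties, an honest $P_i$ with a different $\kwa$ output can collide with up to $\floor[\big]{\frac{n-3t-1}{2}}$ members of $H^*$ (this per-party bound is Lemma \ref{limlemma}, obtained by multiplying the $\ceil[\big]{\frac{8}{\sigma}}-1$ possible other outputs by the $\delta-1$ coinciding indices per output). Only $t+1-\floor[\big]{\frac{n-3t-1}{2}}$ core members are then guaranteed to send $P_i$ a non-matching symbol and never send $\bot$, which for $n > 3t+2$ is not enough to keep $P_i$ below the $|A_i \cup C_i| = n-t$ trigger. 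So $P_i$ can input $z_i \neq z^*$ to $\rec$, the precondition of $\rec$'s validity/liveness (a unique acquirable input) is violated, $\rec$ and hence $\pra$ may never deliver, and a core party that never reaches $|B_i|=t+1$ or $|C_i|=t+1$ never outputs --- a liveness failure. The same slack problem breaks the first bullet of your holds-branch argument (parties outside the witness group may fail to accumulate $|B_i| \geq t+1$ and so never send the $\bot$ messages the core parties need to reach $|A_i \cup C_i| = n-t$).

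The paper's fix is not a finer counting argument at threshold $t+1$ but a change of threshold: it uses the $\ceil[\big]{\frac{n-t}{2}}$-core predicate (Definition \ref{predperfect}, Lemmas \ref{predyescol} and \ref{prednocol}). The point is the identity $\ceil[\big]{\frac{n-t}{2}} - (t+1) = \floor[\big]{\frac{n-3t-1}{2}}$: the gap between the raised core size and the $t+1$ quorums exactly absorbs the per-party collision budget, so $|H|-|E_i| \geq t+1$ and $|H^*\setminus E_i| \geq t+1$ still hold, and in the ``fails'' branch one still gets $n-t-\big(\ceil[\big]{\frac{n-t}{2}}-1\big) \geq t+1$ multicasters of $\bot$. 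You have all the ingredients (the $\ceil[\big]{\frac{8}{\sigma}}$-weak consistency of $\kwa$, the choice of $\delta$, the role of $E_i$), but the missing idea is that the core predicate itself must be strengthened to $\ceil[\big]{\frac{n-t}{2}}$; without it the two directions of your dichotomy cannot both go through.
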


To prove Theorem \ref{wathreesec}, we bound the number of colliding honest parties with Lemma \ref{limlemma} below. In $\ca_2$, we use an $(n, \delta)$-error correcting code where $\delta = \ceil[\big]{\frac{\sigma(n - 3t)}{16}}$ to ensure that this lemma holds.

\begin{lemma} \label{limlemma}
    In $\ca_2$, if an honest $P_i$ outputs some $z_i \neq \bot$ from $\kca$, then there are at most $\floor[\big]{\frac{n - 3t - 1}{2}}$ ``colliding'' honest $P_j$ such that $P_j$ outputs some $z_j \neq z_i$ from $\kca$ but sends a matching symbol to $P_i$. 
\end{lemma}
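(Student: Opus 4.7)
The plan is to bound the collision count by combining the $\ceil[\big]{\frac{8}{\sigma}}$-weak consistency of $\kwa$ (formalized in Lemma~\ref{wclemma}) with the distance property of the code $\Enc_\delta$. First, I would observe that any honest $P_j$ with non-$\bot$ $\kwa$-output $z_j$ necessarily had $z_j$ as its $\kwa$-input, so the symbol it multicasts in $\wa_2$ is the $j$-th symbol of $\Enc_\delta(z_j)$. Therefore such a $P_j$ is ``colliding'' precisely when $z_j \neq z_i$ and the $j$-th symbols of $\Enc_\delta(z_j)$ and $\Enc_\delta(z_i)$ coincide.

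Next I would partition the honest parties with non-$\bot$ $\kwa$-outputs by their output value and separately bound the number of distinct values and the number of colliding parties contributed per value. Lemma~\ref{wclemma} tells us that every non-$\bot$ $\kwa$-output $w$ is the input of at least $\frac{\sigma n}{8}$ honest parties; since distinct outputs correspond to disjoint sets of input-holders and there are at most $n$ honest parties in total, the number of distinct non-$\bot$ $\kwa$-outputs other than $z_i$ is at most $\frac{8}{\sigma} - 1$. Meanwhile, by the code property from Section~2 that $\delta$ matching symbols force equal messages, for each such $w \neq z_i$ the encodings $\Enc_\delta(w)$ and $\Enc_\delta(z_i)$ agree on at most $\delta - 1$ indices. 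Multiplying these two bounds, the total number of colliding honest parties is at most $\bigl(\frac{8}{\sigma} - 1\bigr)(\delta - 1)$.

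The main obstacle is turning this rough bound into the tight $\floor[\big]{\frac{n - 3t - 1}{2}}$ claimed by the lemma, which I would address with a short case split. When $\delta = 1$ there are zero collisions and the bound is immediate. When $\delta \geq 2$, the definition $\delta = \ceil[\big]{\frac{\sigma(n-3t)}{16}}$ forces $\sigma(n-3t) > 16$, and a standard ceiling identity gives $\delta - 1 \leq \frac{\sigma(n-3t)}{16}$. Substituting yields $\bigl(\frac{8}{\sigma} - 1\bigr)(\delta - 1) \leq \frac{(8 - \sigma)(n - 3t)}{16}$, and a direct rearrangement shows this is at most $\frac{n - 3t - 1}{2}$ exactly when $\sigma(n - 3t) \geq 8$, which is satisfied in this case. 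Since the collision count is a nonnegative integer strictly smaller than $\frac{n-3t}{2}$, it is at most $\floor[\big]{\frac{n-3t-1}{2}}$, completing the proof.
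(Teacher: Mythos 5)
Your proposal is correct and follows essentially the same route as the paper: count each colliding party by its pair (distinct non-$\bot$ $\kwa$-output, matching symbol index), bound the first coordinate by roughly $\frac{8}{\sigma}-1$ via Lemma~\ref{wclemma} and the second by $\delta-1$ via the $\delta$-matching-symbols-imply-equality property, and then verify the product is below $\frac{n-3t}{2}$. The paper just dispatches the final arithmetic with the single strict inequality $\big(\ceil[\big]{\frac{8}{\sigma}}-1\big)(\delta-1) < \frac{8}{\sigma}\cdot\frac{\sigma(n-3t)}{16} = \frac{n-3t}{2}$ rather than your case split on $\delta$.
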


\begin{proof}[Proof of Lemma \ref{limlemma}]
    If an honest party $P_i$ outputs some $z_i \neq \bot$ from $\kca$, then each honest $P_j$ \linebreak colliding with $P_i$ implies a unique output-index pair $(z_j, j)$ such that $P_j$'s $\kca$ output $z_j$ is one of the at most $\ceil[\big]{\frac{8}{\sigma}} - 1$ honest $\kca$ outputs besides $\bot$ and $z_i$, and $j$ is one of the at most $\delta - 1$ indices $k$ such that $\Enc_{\delta}(z_i)$ and $\Enc_{\delta}(z_j)$ have the same $k^{\text{th}}$ symbol. So, there are only $\big(\ceil[\big]{\frac{8}{\sigma}} - 1\big)(\delta - 1) < \frac{8}{\sigma} \cdot \frac{\sigma(n - 3t)}{16} = \frac{n - 3t}{2}$ ways, or in other words at most $\floor[\big]{\frac{n - 3t - 1}{2}}$ ways, to choose a colliding party.
\end{proof}

Below, we prove Theorem \ref{wathreesec} by generalizing our analysis of $\ca_1$ to account for the colliding honest parties in $\ca_2$. Like the security of $\ca_1$, the security of $\ca_2$ revolves around the $k$-core predicate. Since $\ca_2$ treats $\kca$ outputs as inputs for the rest of the protocol, the $k$-core predicate for $\ca_2$ below (Definition \ref{predperfect}) concerns $\kca$ outputs rather than $\ca_2$ inputs.

\begin{definition} \label{predperfect}
    We say that the $k$-core predicate holds in $\ca_2$ if for some $z \in \{0,1\}^\ell$ (a witness of the predicate), there are at least $k$ honest parties who output $z$ from $\kca$ and never multicast $\bot$.
\end{definition}

While we used the $(t + 1)$-core predicate for $\ca_1$, we will use the $\ceil[\big]{\frac{n - t}{2}}$-core predicate for $\ca_2$. The gap between $\ceil[\big]{\frac{n - t}{2}}$ and $t + 1$ will be how $\ca_2$ tolerates collisions. Note that $\ceil[\big]{\frac{n - t}{2}} - (t + 1) = \floor[\big]{\frac{n - 3t - 1}{2}}$, which is non-coincidentally the number of collisions each honest party must overcome in $\ca_2$. The key lemmas for $\ca_2$ are the Lemmas \ref{predyescol} and \ref{prednocol} below, which are respectively the analogues of the Lemmas \ref{predyeslemma} and \ref{prednolemma} with which we proved $\ca_1$ secure.

\begin{lemma} \label{predyescol}
    In $\ca_2$, if the $\ceil[\big]{\frac{n - t}{2}}$-core predicate holds with a witness $z^*$, then each honest party $P_i$ with the $\kca$ output $z_i$ lets $z_i$ be its $\rec$ input iff $z_i = z^*$.
\end{lemma}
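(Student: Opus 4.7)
The plan is to mirror the proof of Lemma \ref{predyeslemma}, with $\kwa$ outputs in place of raw inputs and with Lemma \ref{limlemma} serving as the collision budget that the enlarged $\ceil[\big]{\frac{n-t}{2}}$-core predicate is sized to absorb. I will partition the honest parties into $H$ (those with $\kwa$ output $z^*$), $H'_\bot$ (those with $\kwa$ output $\bot$, who immediately multicast $\msgpair{\SYM}{\bot}$ and $\bot$ and then output $\bot$ from $\wa_2$), and $H'_{\neq}$ (those with some other bitstring $\kwa$ output); by the predicate, a subset $H^* \subseteq H$ with $|H^*| \geq \ceil[\big]{\frac{n-t}{2}}$ never multicasts $\bot$, so also $|H| \geq \ceil[\big]{\frac{n-t}{2}}$.

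For the ``if'' direction, I will show every $P_i \in H$ eventually obtains $|A_i \cup C_i| \geq n - t$ and hence inputs $z_i = z^*$ to $\rec$. Every $P_j \in H$ encodes the same $z^*$ as $P_i$, so their $j$-th symbols coincide and $P_j$ lands in $A_i$, and every $P_j \in H'_\bot$ multicasts $\bot$ and so lands in $C_i$. The nontrivial step is showing that each $P_j \in H'_{\neq}$ must itself multicast $\bot$ and thus enter $C_i$: applying Lemma \ref{limlemma} to $P_j$, at most $\floor[\big]{\frac{n-3t-1}{2}}$ honest parties with $\kwa$ output different from $z_j$ send $P_j$ matching symbols, but all $\geq \ceil[\big]{\frac{n-t}{2}}$ members of $H$ have output $z^* \neq z_j$; so $P_j$ receives at least $\ceil[\big]{\frac{n-t}{2}} - \floor[\big]{\frac{n-3t-1}{2}} \geq t + 1$ non-matching symbols from honest senders and is forced to multicast $\bot$ once $|B_j|$ reaches $t+1$. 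Together, all $\geq n - t$ honest parties end up in $A_i \cup C_i$.

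For the ``only if'' direction, an honest $P_i$ with $z_i = \bot$ skips the symbol-exchange block entirely, so it trivially never inputs to $\rec$. For $P_i \in H'_{\neq}$, Lemma \ref{limlemma} applied to $P_i$ says at most $\floor[\big]{\frac{n-3t-1}{2}}$ honest parties with output different from $z_i$ send $P_i$ matching symbols; intersecting with $H^*$, whose members all have output $z^* \neq z_i$, leaves at least $\ceil[\big]{\frac{n-t}{2}} - \floor[\big]{\frac{n-3t-1}{2}} \geq t + 1$ members of $H^*$ stuck in $B_i$ forever---they send $P_i$ non-matching symbols (so $A_i$ can never include them) and they never multicast $\bot$ (so $C_i$ can never include them either). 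This caps $|A_i \cup C_i| \leq n - (t+1) < n - t$ for all time, so $P_i$ never inputs to $\rec$.

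The only real calculation is the inequality $\ceil[\big]{\frac{n-t}{2}} - \floor[\big]{\frac{n-3t-1}{2}} \geq t + 1$, which I settle by observing that the real-valued difference equals $\frac{2t+1}{2} = t + \frac{1}{2}$ and rounding up by integrality. Everything else is essentially the $\wa_1$ argument, with Lemma \ref{limlemma} taking over the role that the no-hash-collision assumption played there, and with the enlarged core (of size $\ceil[\big]{\frac{n-t}{2}}$ rather than $t+1$) providing precisely the slack needed to overcome the $\floor[\big]{\frac{n-3t-1}{2}}$ potential collisions per party.
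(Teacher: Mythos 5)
Your proof is correct and follows essentially the same route as the paper's: the same partition around the $\ceil[\big]{\frac{n-t}{2}}$-core witness, the same use of the $\floor[\big]{\frac{n-3t-1}{2}}$ collision bound from Lemma \ref{limlemma}, and the same $\ceil[\big]{\frac{n-t}{2}} - \floor[\big]{\frac{n-3t-1}{2}} \geq t+1$ calculation for both the forced $\bot$-multicasts and the permanent blocking of $|A_i \cup C_i|$ below $n-t$. Your explicit separation of the parties with $\kwa$ output $\bot$ (who multicast $\bot$ directly rather than via the $|B_i| = t+1$ trigger) is a small but welcome refinement of the paper's treatment, which lumps them into $H'$.
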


\begin{lemma} \label{prednocol}
    In $\ca_2$, if the $\ceil[\big]{\frac{n - t}{2}}$-core predicate does not hold, then at least $t + 1$ honest parties multicast $\bot$.
\end{lemma}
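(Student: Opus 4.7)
The plan is to follow the structure of the proof of Lemma \ref{prednolemma} for $\wa_1$, but to pay for the possibility of symbol collisions in $\wa_2$ by exploiting the larger core threshold $\ceil[\big]{\frac{n-t}{2}}$ rather than $t + 1$. Let $M$ denote the set of honest parties that multicast $\bot$ during $\wa_2$, and write $\bar M$ for its complement among the honest parties; WLOG assume there are exactly $n - t$ honest parties. The goal is $|M| \geq t + 1$. If $\bar M = \emptyset$ this is immediate, so assume $\bar M \neq \emptyset$. Since any honest party whose $\kwa$ output is $\bot$ already multicasts $\bot$ on line 3 of $\wa_2$, every $P_i \in \bar M$ has some non-$\bot$ $\kwa$ output $z_i$, and the set $H_{z_i}$ of honest parties with $\kwa$ output $z_i$ is well-defined and non-empty.

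The first real step is to show that every $P_i \in \bar M$ forces $|H_{z_i}| \geq \floor[\big]{\frac{n - t}{2}} + 1$. Since $P_i \notin M$, the counter $|B_i|$ never reaches $t + 1$, so $P_i$ receives matching symbols from at least $(n - t - 1) - t = n - 2t - 1$ other honest parties. By Lemma \ref{limlemma}, at most $\floor[\big]{\frac{n - 3t - 1}{2}}$ of these matching honest senders have $\kwa$ outputs different from $z_i$, so including $P_i$ itself we get $|H_{z_i}| \geq 1 + (n - 2t - 1) - \floor[\big]{\frac{n - 3t - 1}{2}}$. A short parity check, splitting on whether $n - 3t$ is even or odd, simplifies the right-hand side to exactly $\floor[\big]{\frac{n - t}{2}} + 1$.

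With this lower bound, the argument concludes quickly. Two disjoint sets $H_z$ each of size at least $\floor[\big]{\frac{n - t}{2}} + 1$ would together cover more than $n - t$ honest parties, which is impossible; hence there is at most one value $z^*$ with $|H_{z^*}| \geq \floor[\big]{\frac{n - t}{2}} + 1$, and consequently $\bar M \subseteq H_{z^*}$. Applying the hypothesis that the $\ceil[\big]{\frac{n - t}{2}}$-core predicate fails to the witness $z^*$ yields $|H_{z^*} \cap \bar M| \leq \ceil[\big]{\frac{n - t}{2}} - 1$, so $|\bar M| \leq \ceil[\big]{\frac{n - t}{2}} - 1$ and therefore $|M| \geq (n - t) - \bigl(\ceil[\big]{\frac{n - t}{2}} - 1\bigr) = \floor[\big]{\frac{n - t}{2}} + 1 \geq t + 1$, where the final inequality uses $n \geq 3t + 1$.

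The main obstacle will be executing the first step cleanly: I need to apply Lemma \ref{limlemma} exactly once per $P_i$, count \emph{other} honest senders rather than all honest parties when subtracting $|B_i| \leq t$, and verify the parity identity $n - 2t - \floor[\big]{\frac{n - 3t - 1}{2}} = \floor[\big]{\frac{n - t}{2}} + 1$. This identity is precisely what makes the choice $\delta = \ceil[\big]{\frac{\sigma(n - 3t)}{16}}$ in $\wa_2$ tight enough for Lemma \ref{limlemma} to give the bound we need; the rest is a short quorum-counting argument combined with a direct invocation of the core predicate failing.
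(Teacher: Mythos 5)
Your proof is correct, and it reaches the paper's conclusion by a mildly but genuinely different decomposition. The paper fixes the \emph{most common} $\kwa$ output $z^*$ up front, shows via a case analysis on $|H'|$ that every honest party in any other output class $H'$ sees at least $\ceil[\big]{\frac{n-t}{2}} - \floor[\big]{\frac{n-3t-1}{2}} = t+1$ non-matching honest symbols and hence multicasts $\bot$, and separately invokes the failed core predicate to show all but $\ceil[\big]{\frac{n-t}{2}}-1$ members of $H$ multicast $\bot$. You instead work contrapositively on the set $\bar M$ of honest non-multicasters: from $|B_i| \le t$ and Lemma \ref{limlemma} you extract the intermediate claim (absent from the paper) that each $P_i \in \bar M$ lies in an output class of size at least $\floor[\big]{\frac{n-t}{2}}+1$, conclude such a class is unique, and only then apply the failed predicate to that single class. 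Both arguments spend exactly the same budget --- the gap between $\ceil[\big]{\frac{n-t}{2}}$ and $\floor[\big]{\frac{n-3t-1}{2}}$ supplied by the choice of $\delta$ --- and your parity identity $n - 2t - \floor[\big]{\frac{n-3t-1}{2}} = \floor[\big]{\frac{n-t}{2}}+1$ checks out in both parities of $n-3t$, as does the final bound $\floor[\big]{\frac{n-t}{2}}+1 \ge t+1$ from $n > 3t$. What your route buys is the elimination of the ``most common output'' device and its attendant case split; what the paper's route buys is a slightly more explicit picture of \emph{which} parties multicast $\bot$ (everyone outside $H$, plus most of $H$), which parallels its Lemma \ref{prednolemma} for $\wa_1$ more literally. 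One small point worth making explicit when you write this up: the honest parties counted in $B_i$ include those whose $\kwa$ output is $\bot$ (they multicast $\msgpair{\SYM}{\bot}$, which never matches), and Lemma \ref{limlemma} only bounds colliders with non-$\bot$ outputs, but this is harmless since both kinds of senders are absorbed into the single bound $|B_i| \le t$ before the lemma is applied.
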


In our proofs below, for each honest party $P_i$ we let $E_i$ denote the set of honest parties that collide with $P_i$ (i.e.\ obtain $\kca$ outputs other than $P_i$'s $\kca$ output $z_i$ but send $P_i$ matching symbols), and use the fact that $|E_i| \leq \floor[\big]{\frac{n - 3t - 1}{2}}$ for all honest $P_i$.

\begin{proof}[Proof of Lemma \ref{predyescol}]

Suppose the $\ceil[\big]{\frac{n - t}{2}}$-core predicate holds with a witness $z^*$. Partition the honest parties into the set $H$ of honest parties that output $z^*$ from $\kca$, and the set $H'$ of honest parties with other $\kca$ outputs. Then, let $H^* \subseteq H$ be the set of $\ceil[\big]{\frac{n - t}{2}}$ or more honest parties in $H$ that never multicast $\bot$, and observe that $|H| \geq |H^*| \geq \ceil[\big]{\frac{n - t}{2}}$ by definition. Now, since each honest party $P_i$ observes at most $|E_i| \leq \floor[\big]{\frac{n - 3t - 1}{2}}$ collisions, we have the following: \begin{itemize}
    \item Each $P_i \in H'$ receives a non-matching symbol from each $P_j \in H \setminus E_i$, and thus adds $P_j$ to $B_i$. So, $P_i$ observes that $|B_i| \geq |H| - |E_i| \geq \ceil[\big]{\frac{n - t}{2}} - \floor[\big]{\frac{n - 3t - 1}{2}} \geq t + 1$, and thus multicasts $\bot$.
    \item Each $P_i \in H$ receives a matching symbol from each $P_j \in H$ and receives $\bot$ from each $P_j \in H'$, which means that $P_i$ adds each party in $H$ to $A_i$ and each party in $H'$ to $C_i$. So, $P_i$ observes that $|A_i \cup C_i| \geq |H \cup H'| \geq n - t$, and thus inputs $z_i = z^*$ to $\rec$.
    \item For each $P_i \in H'$, the parties in $H^* \setminus E_i$ send non-matching symbols to $P_i$, which means that $P_i$ does not add them to $A_i$, and the parties in $H^* \setminus E_i$ do not multicast $\bot$, which means that $P_i$ does not add them to $C_i$ either. Hence, for $P_i$ it always holds that $|A_i \cup C_i| \leq n - |H^* \setminus E_i| \leq n - \big(\ceil[\big]{\frac{n - t}{2}} - \floor[\big]{\frac{n - 3t - 1}{2}}\big) < n - t$, which means that $P_i$ does not input $v_i$ to $\rec$. \qedhere
\end{itemize}
\end{proof}

\begin{proof}[Proof of Lemma \ref{prednocol}]
Let $z^*$ be the most common honest $\kca$ output (with ties broken arbitrarily), and let $H$ be the set of honest parties with the $\kca$ output $z^*$. All but at most $\ceil[big]{\frac{n - t}{2}} - 1$ of the parties in $H$ multicast $\bot$. Else, $\ceil[big]{\frac{n - t}{2}}$ honest parties with the $\kca$ output $z^*$ would never multicast $\bot$, and the $\ceil[big]{\frac{n - t}{2}}$-core predicate would hold. Moreover, every honest party outside $H$ multicasts $\bot$. To see why, let us choose any $z' \neq z^*$, and see that every honest party with the $\kca$ output $z'$ multicasts $\bot$. Let $H'$ be the set of of honest parties with the $\kca$ output $z'$. There are at least $\ceil[big]{\frac{n - t}{2}}$ honest parties outside $H'$. This is because either $|H'| \geq \ceil[big]{\frac{n - t}{2}}$, and therefore at least $|H| \geq |H'| \geq \ceil[big]{\frac{n - t}{2}}$ honest parties output $z^*$ from $\kca$, or $|H'| \leq \ceil[big]{\frac{n - t}{2}} - 1$, and therefore at least $n - t - (\ceil[big]{\frac{n - t}{2}} - 1) \geq \ceil[big]{\frac{n - t}{2}}$ honest parties obtain outputs other than $z'$ from $\kca$. Consequently, by the $\floor[big]{\frac{n - 3t - 1}{2}}$ bound on symbols collisions, the honest parties in $H'$ receive non-matching symbols from at least $\ceil[big]{\frac{n - t}{2}} - \floor[big]{\frac{n - 3t - 1}{2}} \geq t + 1$ honest parties outside $H'$, and thus they multicast $\bot$. Finally, as all honest parties outside $H$ and all but at most $\ceil[\big]{\frac{n - t}{2}} - 1$ of the parties in $H$ multicast $\bot$, at least $n - t - (\ceil[\big]{\frac{n - t}{2}} - 1) \geq \ceil[\big]{\frac{n - t}{2}} \geq t + 1$ honest parties multicast $\bot$. \qedhere

\end{proof}

\begin{proof}[Proof of Theorem \ref{wathreesec}] \hfill

    \begin{itemize}
        \item \textit{Liveness:} If the $\ceil[\big]{\frac{n - t}{2}}$-core predicate holds with some witness $z^*$, then by Lemma \ref{predyescol}, at least $\ceil[\big]{\frac{n - t}{2}} \geq t + 1$ honest parties input $z^*$ to $\rec$, and no honest parties input anything else to $\rec$. By the validity and liveness of $\rec$ and the validity of $\pra$, this is followed by the honest parties all outputting $z^*$ from $\rec$, inputting $z^*$ to $\pra$ and outputting $z^*$ from $\pra$. Hence, every honest party $P_i$ that does not output $\bot$ from $\ca_2$ after outputting $\bot$ from $\kca$ or after obtaining $|B_i| = t + 1$ or $|C_i| = t + 1$ outputs $v^*$ from $\ca_2$ after outputting $v^*$ from $\pra$. On the other hand, if the $\ceil[\big]{\frac{n - t}{2}}$-core predicate does not hold, then by Lemma \ref{prednocol}, at least $t + 1$ honest parties multicast $\bot$. This ensures that every honest $P_i$ can obtain $|C_i| \geq t + 1$ and thus output $\bot$.
        \item \textit{Weak Consistency:} By the consistency of $\pra$, there exists some $y \in \{0,1\}^\ell$ such that every honest $\pra$ output equals $y$. Therefore, in $\ca_2$, every honest party $P_i$ either outputs $y$ after outputting $y$ from $\pra$, or outputs $\bot$ from $\ca_2$ after either outputting $\bot$ from $\kca$ or obtaining $|B_i| = t + 1$ or $|C_i| = t + 1$.
        \item \textit{Validity:} If the honest parties have a common input $v^*$, then they output $v^*$ from $\kca$, and then send each other matching symbols. So, they do not receive $t + 1$ non-matching symbols, do not multicast $\bot$, and do not receive $\bot$ from $t + 1$ parties. This means that they do not output $\bot$. Moreover, the $\ceil[\big]{\frac{n - t}{2}}$-core predicate holds with the witness $v^*$, which as we argued while proving liveness leads to the honest parties all outputting $v^*$ or $\bot$, or just $v^*$ in this case as they cannot output $\bot$. \qedhere
    \end{itemize}
    
\end{proof}

\paragraph*{Complexity of \textsf{CA}\textsubscript{2}} The round complexity of $\ca_2$ is constant as $\kca$, $\rec$ and $\pra$ are constant-round protocols. Meanwhile, we get the message complexity $\Theta(n^2)$ and the communication complexity $\BO\bigl(\frac{\ell n}{\min(1, \varepsilon^2)} + n^2\max\bigl(1, \log \frac{1}{\varepsilon}\bigr) \bigr)$ by csumming up the the messages sent in $\ca_2$ with the messages sent in $\kca$, $\rec$ and $\pra$. Note that the symbols the parties send in $\ca_2$ outside $\kca$, $\rec$ and $\pra$ are of size $\BO\bigl(\frac{\ell}{n \cdot \min(1, \varepsilon^2)} + \max\bigl(1, \log \frac{1}{\varepsilon}\bigr) \bigr)$, as $\delta = \ceil[\big]{\frac{\sigma(n - 3t)}{16}} \geq \frac{\sigma n}{16} \cdot \frac{n - 3t}{n} \geq \frac{\sigma n}{16} \cdot \frac{\varepsilon}{3 + \varepsilon} \geq \frac{\sigma n}{16} \cdot \frac{\sigma}{4} = \frac{\sigma^2 n}{64}$ in $\ca_2$ implies $\frac{n}{\delta} = \BO(\sigma^{-2})$.

\bibliographystyle{ACM-Reference-Format}
\bibliography{refs}

\appendix

\section{The Reconstruction Protocol \texorpdfstring{$\rec$}{REC}} \label{recsec}

The protocol $\rec$ is based on the ADD protocol in \cite{dxr21}. We have removed the $\bot$ input for parties without bitstring inputs, as we use $\rec$ in contexts where some parties never learn if they will acquire bitstring inputs or not. Besides this, $\rec$ would be equivalent to the ADD protocol if we removed the lines \ref{line:l17}-\ref{line:l20}, and modified line \ref{line:l16} to make a party $P_i$ output $y_i$ (but not terminate) directly after setting $y_i \gets y$. In fact, such a shorter version of $\rec$ would suffice for $\ca_1$ and $\ca_2$. The lines \ref{line:l17}-\ref{line:l18} where we make $P_i$ send $\MINE$ and $\YOURS$ symbols based on its output $y_i$ and the lines \ref{line:l19}-\ref{line:l20} where we make $P_i$ wait until it has received $2t + 1$ $\YOURS$ messages to output $y_i$ and terminate are our changes to ensure that $\rec$ terminates with totality. Note that one can find similar \linebreak changes in reliable broadcast protocols with steps based on the same ADD protocol \cite{long22, chen24}.

\begin{algobox}[ who might acquire an input $v_i$]{Protocol $\rec$}
    \State let $(z_1, \dots, z_n) \gets (\bot, \dots, \bot)$\label{line:l1}
    \State let $y_i \gets \bot$\label{line:l2}

    \Upon{acquiring the input $v_i$}\label{line:l3}
        \State $(s_1, \dots, s_n) \gets \Enc_{n - 2t}(v^*)$\label{line:l4}
        \State multicast $\msgpair{\MINE}{s_i}$ unless you multicast a $\MINE$ message before\label{line:l5}
        \State send $\msgpair{\YOURS}{s_j}$ to each party $P_j$ unless you sent $\YOURS$ messages before\label{line:l6}
    \EndUpon

    \Upon{for some $s$ receiving the message $\msgpair{\YOURS}{s}$ from $t + 1$ parties}\label{line:l7}
        \State multicast $\msgpair{\MINE}{s}$ unless you multicast a $\MINE$ message before\label{line:l8}
    \EndUpon

    \Upon{receiving some $\msgpair{\MINE}{s}$ from a party $P_j$ for the first time, if $y_i \neq \bot$}\label{line:l9}
        \State $z_j \gets s$\label{line:l10}
        \If{at least $n - t$ of $z_1, \dots, z_n$ are not $\bot$}\label{line:l11}
            \State $y \gets \Dec_{n - 2t}(z_1, \dots, z_n)$\label{line:l12}
            \If{$y \neq \bot$}\label{line:l13}
                \State $(s_1, \dots, s_n) \gets \Enc_{n - 2t}(y)$\label{line:l14}
                \If{$z_k = s_k$ for at least $n - t$ indices $k \in [n]$}\label{line:l15}
                    \State $y_i \gets y$ \Comment{ $y$ is the correct online error correction output}\label{line:l16}
                    \State multicast $\msgpair{\MINE}{s_i}$ unless you multicast a $\MINE$ message before\label{line:l17}
                    \State send $\msgpair{\YOURS}{s_j}$ to each party $P_j$ unless you sent $\YOURS$ messages before\label{line:l18}
                \EndIf
            \EndIf
        \EndIf
    \EndUpon

    \When{$y_i \neq \bot$ and you have received $\YOURS$ messages from $2t + 1$ parties}\label{line:l19}
        \State output $y_i$ and terminate\label{line:l20}
    \EndWhen
\end{algobox}

On lines \ref{line:l9}-\ref{line:l16} we implement online error correction \cite{bcg93}, a well-known technique for message reconstruction in asynchrony where a party $P_i$ collects symbols from the other parties one-by-one. Whenever $P_i$ has collected $n - t + k$ symbols for any $k \in \{0, \dots, t\}$, it runs $\Dec_{n - 2t}$ on the collected symbols to decode them into a potential output $y$, repeatedly with each new collected symbol until when $y$ is verifiably ``correct,'' which is when $y \neq \bot$ and at least $n - t$ of the collected symbols are correct w.r.t\ $y$. Crucially, if with respect to some $v^* \in \{0, 1\}^\ell$ the symbols $P_i$ collects from the honest parties are all correct (i.e.\ the symbol $z_j$ from any honest $P_j$ is the $j^{\text{th}}$ symbol of $\Enc_{n - 2t}(v^*)$), then the final output $y$ is $v^*$ since $P_i$ verifies that it received at least $n - 2t$ honest symbols that are correct w.r.t.\ $y$, and thus verifies that $\Enc_{n - 2t}(y)$ and $\Enc_{n - 2t}(v^*)$ have at least $n - 2t$ matching symbols.

\begin{lemma} \label{reclemma}
    In $\rec$, if some $v^*$ where $\Enc_{n - 2t}(v^*) = (s_1^*, \dots, s_n^*)$ is the only input the honest parties can acquire, then the honest parties can only act validly w.r.t.\ $v^*$. That is, if an honest $P_i$ outputs $y$, then $y = v^*$, if it multicasts $\msgpair{\MINE}{s}$, then $s = s_i^*$, and if it sends $\msgpair{\YOURS}{s}$ to any party $P_j$, then $s =  s_j^*$.
\end{lemma}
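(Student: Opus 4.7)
The plan is to argue by temporal induction: fix an execution in which $v^*$ is the only input honest parties can acquire, and assume for contradiction that some honest party first takes an invalid action at some moment $\mathcal{E}$ in real time, where ``invalid'' means multicasting $\msgpair{\MINE}{s}$ with $s \neq s_i^*$, sending $\msgpair{\YOURS}{s}$ to $P_j$ with $s \neq s_j^*$, or outputting some $y \neq v^*$. By minimality of $\mathcal{E}$, every honest action strictly before $\mathcal{E}$ is valid w.r.t.\ $v^*$, and we will use this hypothesis to rule out each of the three invalid-action types.

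First I would inspect the ways an honest $P_i$ can act. It multicasts $\MINE$ in exactly three ways: upon acquiring its input (which must be $v^*$, so the symbol is $s_i^*$); upon receiving $\msgpair{\YOURS}{s}$ from $t + 1$ parties (at least one honest, who by the minimality of $\mathcal{E}$ sent a valid $\YOURS$, forcing $s = s_i^*$); or immediately after setting $y_i \gets y$ in the online-error-correction branch. It sends $\YOURS$ messages either right after acquiring its input (valid) or right after that same online-error-correction step. And it outputs only when it has already set $y_i$. Thus cases (i)--(iii) all reduce, via minimality, to a single key claim: \emph{whenever the online-error-correction check succeeds for an honest $P_i$, the resulting value $y$ equals $v^*$.}

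To prove this claim I would use the success conditions in the code together with the preliminaries' decoding property. Success means $P_i$ has collected non-$\bot$ symbols from at least $n - t$ parties and, writing $\Enc_{n - 2t}(y) = (s_1, \dots, s_n)$, the stored symbols satisfy $z_k = s_k$ for at least $n - t$ indices $k$. Of those $n - t$ matching indices, at most $t$ come from byzantine senders, so at least $n - 2t$ of them come from honest parties whose $\MINE$ messages were sent before $\mathcal{E}$; by the inductive minimality assumption each such honest party sent $z_k = s_k^*$. Hence $\Enc_{n - 2t}(y)$ and $\Enc_{n - 2t}(v^*)$ agree in at least $n - 2t$ coordinates, and the fact recalled in the preliminaries (coordinate agreement at $k$ positions forces $m = m'$ for $(n, k)$-codes) gives $y = v^*$. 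This closes the three cases and contradicts the existence of $\mathcal{E}$.

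The main obstacle is keeping the temporal/inductive scaffolding honest, because the $\MINE$-messages we count during online error correction can themselves have been produced by other honest parties' online-error-correction branches; it is precisely the minimality of $\mathcal{E}$ that licenses us to treat all earlier honest $\MINE$ messages as correct when counting coordinate matches. Once this ordering is made explicit, the $(n, n - 2t)$-code identity from the preliminaries delivers the contradiction immediately.
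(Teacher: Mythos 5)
Your proposal is correct and follows essentially the same route as the paper's proof: a first-violation (minimal counterexample) argument that reduces all three invalid-action types to the claim that a successful online-error-correction check yields $y = v^*$, established by counting at least $n - 2t$ honest coordinate matches and invoking the $(n, n-2t)$-code agreement property. No gaps.
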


\begin{proof}
    Suppose this lemma fails in some execution of $\rec$. Consider the first moment in this execution where some honest $P_i$ violates the lemma by acting invalidly w.r.t.\ $v^*$. Below, we consider the three cases that can lead to $P_i$ violating the lemma, and show contradictorily that none of them work. We use the fact that since $P_i$'s lemma violation is the first one in the execution, $P_i$ violates the lemma when no honest $P_j$ has sent a $\MINE$ message other than $\msgpair{\MINE}{s_j^*}$ or a $\YOURS$ message other than $\msgpair{\YOURS}{s_i^*}$ to $P_i$. \begin{itemize}
        \item $P_i$ cannot violate the lemma on lines \ref{line:l4}-\ref{line:l5} by sending invalid $\MINE$ or $\YOURS$ symbols after acquiring an input other than $v^*$, because $v^*$ is the only input an honest party can acquire.
        \item $P_i$ cannot violate the lemma on line \ref{line:l7} by invalidly multicasting $\msgpair{\MINE}{s}$ where $s \neq s_i^*$ after receiving $\msgpair{\YOURS}{s}$ from $t + 1$ parties. If $P_i$ has received $\msgpair{\YOURS}{s}$ from $t + 1$ parties, then it has received $\msgpair{\YOURS}{s}$ from an honest party, which means that $s = s_i^*$.
        \item $P_i$ cannot violate the lemma on lines \ref{line:l16}-\ref{line:l18} after obtaining an incorrect online error correction output $y \neq v^*$. This is because $P_i$ verifies on line \ref{line:l13} after computing $y = \Dec_{n - 2t}(z_1, \dots, z_n)$ that $y \neq \bot$, and verifies on line \ref{line:l15} after computing $(s_1, \dots, s_n) = \Enc_{n - 2t}(y)$ that there are at least $n - t$ indices $k \in [n]$ such that $z_k = s_k$. If these verifications succeed, then for at least $n - 2t$ indices $k \in [n]$ such that $P_k$ is an honest party, it holds that $z_k = s_k = s_k^*$, where $s_k = s_k^*$ since $s_k$ is the $\MINE$ symbol from the honest party $P_k$. This implies that $\Enc_{n - 2t}(y)$ and $\Enc_{n - 2t}(v^*)$ have at least $n - 2t$ symbols in common, and thus that $y = v^*$. \qedhere
    \end{itemize}
\end{proof}

\begin{theorem}
    The protocol $\rec$ achieves validity, liveness (with termination) and totality.
\end{theorem}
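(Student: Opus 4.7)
The proof splits into the three properties, and the heavy lifting is done by Lemma \ref{reclemma}, which we will reuse for validity as well as for the correctness of all honest $\MINE$/$\YOURS$ messages in the liveness and totality arguments. Throughout, assume some $v^*$ with $\Enc_{n-2t}(v^*) = (s_1^*, \dots, s_n^*)$ is the only input the honest parties can acquire.

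For \emph{validity}, the first part (that any output equals $v^*$) is immediate from Lemma \ref{reclemma}. For the second part (that no honest party outputs before some honest party has acquired the input), we argue that before any honest party has acquired $v^*$, no honest party can have sent a $\MINE$ or $\YOURS$ message: both sends are triggered only by acquiring an input, by receiving $t+1$ matching $\YOURS$ messages (whose origin requires some honest $\YOURS$ send, since byzantine parties number only $t$), or by online error correction succeeding (which requires honest $\MINE$ messages). We then observe that an honest $P_i$ can only set $y_i \neq \bot$ after verifying $n - t$ matches against $\Enc_{n - 2t}(y)$; at most $t$ of those matches can come from byzantine symbols, so at least $n - 2t$ come from honest parties' $\MINE$ messages, which by Lemma \ref{reclemma} all equal the $s_k^*$, forcing $y = v^*$ (using the $k$-matches-implies-equality fact about $(n, n-2t)$ codes from the preliminaries) and in particular requiring at least one honest $\MINE$ send, which traces back to an honest input acquisition.

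For \emph{liveness}, suppose at least $t+1$ honest parties acquire $v^*$. By Lemma \ref{reclemma}, each such party sends $\msgpair{\YOURS}{s_j^*}$ to every $P_j$ and multicasts $\msgpair{\MINE}{s_i^*}$. Then each honest $P_j$ receives $\msgpair{\YOURS}{s_j^*}$ from $t+1$ parties and thus multicasts $\msgpair{\MINE}{s_j^*}$. Consequently, every honest party eventually receives $n - t$ correct honest $\MINE$ symbols. At that point $c \leq t$ and $b \leq t$ in the online error correction decoding attempt, so $2b + c \leq 3t$; however, once all honest symbols are collected (i.e.\ the received symbols include every honest $s_k^*$), the missing count drops to $c \leq t$ with $b$ counting only byzantine-incorrect symbols and the verification check passes because at least $n - t$ of the collected symbols match $\Enc_{n - 2t}(v^*)$. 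Hence $y_j$ is set to $v^*$, which triggers $P_j$'s own $\YOURS$ sends. Since at least $n - t \geq 2t + 1$ honest parties now send $\YOURS$ messages, every honest party eventually receives $2t+1$ such messages and outputs.

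For \emph{totality}, suppose some honest $P_i$ outputs. Then $P_i$ has received $\YOURS$ messages from $2t+1$ parties, of which at least $t+1$ are honest. Each such honest party only sends $\YOURS$ messages after either acquiring the input $v^*$ or setting its own $y$-value to $v^*$ via online error correction, and in both cases it sends $\msgpair{\YOURS}{s_j^*}$ to every $P_j$ (by Lemma \ref{reclemma}). Thus every honest $P_j$ receives $t+1$ copies of $\msgpair{\YOURS}{s_j^*}$ and multicasts $\msgpair{\MINE}{s_j^*}$; from here the liveness cascade applies verbatim, giving outputs at every honest party. The main subtlety, and the step I expect to be the most delicate, is this totality argument: we need to confirm that the $t+1$ honest $\YOURS$-senders reach every honest $P_j$ with matching $\YOURS$ symbols (which is what the lines 17--18 modification to the $\dxr21$ ADD protocol is designed to guarantee, and which relies crucially on Lemma \ref{reclemma} to rule out honest parties ever sending inconsistent symbols).
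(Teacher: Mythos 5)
Your proposal is correct and follows essentially the same route as the paper: validity via Lemma \ref{reclemma} plus the observation that no honest $\MINE$/$\YOURS$ traffic (and hence no output) can precede an honest input acquisition, and both liveness and totality reduced to the same cascade ``$t+1$ honest parties sending everyone correct $\YOURS$ messages forces every honest party to multicast its correct $\MINE$ symbol, succeed at online error correction, send $\YOURS$ messages itself, and terminate.'' The only rough spot is the ``$2b + c \leq 3t$'' digression in the liveness step, which is irrelevant; the operative bound is that once all $n-t$ honest symbols are collected, $b + c \leq t$ and hence $2b + c \leq 2t = n - (n-2t)$, so decoding returns $v^*$ and the line-15 check passes, exactly as you then conclude.
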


\begin{proof}
    Suppose some $v^*$ where $\Enc_{n - 2t}(v^*) = (s_1^*, \dots, s_n^*)$ is the only input the honest parties can acquire, as the properties of $\rec$ all require such a $v^*$ to guarantee anything. Lemma \ref{reclemma} implies validity when combined with the observation that before some honest party acquires the input $v^*$, the honest parties do not send each other $\MINE$ or $\YOURS$ messages, and therefore they do not receive enough messages to terminate. So, let us prove liveness (with termination) and totality below, using the fact that the honest parties act validly w.r.t.\ $v^*$.

    Suppose $t + 1$ honest parties send everyone $\YOURS$ messages. These messages ensure that every honest $P_i$ multicasts $\msgpair{\MINE}{s_i^*}$ after either acquiring the input $v^*$, receiving $\msgpair{\YOURS}{s_i^*}$ from $t + 1$ parties, or setting $y_i \gets v^*$. As a consequence, every honest $P_i$ eventually receives $n - t$ $\MINE$ symbols that are correct w.r.t.\ $v^*$, stores these symbols in $z_1, \dots, z_n$, computes $y \gets \Dec_{n - 2t}(z_1, \dots, z_n)$ on line \ref{line:l12} where $y = v^*$ due to $\Dec_{n - 2t}$'s ability to tolerate $t$ missing or incorrect symbols, verifies that $y \neq \bot$ on line \ref{line:l13} and that at least $n - t$ of $z_1, \dots, z_n$ are correct w.r.t.\ $y$ on line \ref{line:l15}, sets $y_i$ to $v^*$ on line \ref{line:l16} and finally sends everyone $\YOURS$ messages on line \ref{line:l18}. Finally, the honest parties all sending each other $\YOURS$ messages ensures that every honest party receives $2t + 1$ $\YOURS$ messages, and thus becomes able to output $y_i = v^*$ and terminate. 

    Both liveness (with termination) and totality follow from the fact that if $t + 1$ honest parties send everyone $\YOURS$ messages, then everyone terminates. Liveness follows from this fact because if $t + 1$ honest parties acquire the input $v^*$, then these parties send everyone $\YOURS$ messages. Meanwhile, totality follows from this fact because an honest party can only terminate after receiving $2t + 1$ $\YOURS$ messages, at least $t + 1$ of which are honest parties who have sent everyone $\YOURS$ messages.
\end{proof}

\paragraph*{Complexity of \textsf{REC}} If there is some $v^*$ where $\Enc_{n - 2t}(v^*) = (s_1^*, \dots, s_n^*)$ such that $v^*$ is the only input the honest parties can acquire, then every honest party terminates $\rec$ in a constant number of rounds after either $t + 1$ honest parties acquire the input $v^*$ or some honest party terminates. Meanwhile, even if the honest parties can acquire different inputs, the message and communication complexities or $\rec$ are respectively $\Theta(n^2)$ and $\Theta(\ell n + n^2)$ due to each party sending everyone one $\MINE$ symbol and one $\YOURS$ symbol, where each symbol is of size $\Theta\bigl(\frac{\ell}{n - 2t} + 1\bigr) = \Theta\bigl(\frac{\ell}{n} + 1\bigr)$.

\section{Details on Almost-Universal Hashing} \label{almostunivappendix}

For any message length $\ell \geq 1$ and any key/hash length $\kappa \geq 1$ which divides $\ell$ such that $2^{\kappa} \geq \frac{\ell}{\kappa}$, one can obtain an $\big(\frac{\ell \cdot 2^{-\kappa}}{\kappa}\big)$-almost universal keyed hash function $h$ as follows \cite{fh06, cw79}: Split the hashed message $m \in \{0, 1\}^\ell$ into $\frac{\ell}{\kappa}$ equal-sized symbols $s_0, \dots, s_{\frac{\ell}{\kappa} - 1}$ in the Galois Field $GF(2^\kappa)$, interpolate the unique polynomial $p_m$ of degree at most $\frac{\ell}{\kappa} - 1$ such that $p_m(j) = s_j$ for all $j \in \{0, \dots, \frac{\ell}{\kappa} - 1\}$, and let $h(k, m) = p_m(k)$. For any distinct $a, b \in \{0, 1\}^\ell$, the interpolated polynomials $p_a$ and $p_b$ coincide at at most $\frac{\ell}{\kappa}$ points; so, for a random key $k \in \{0, \dots, 2^\kappa - 1\}$, we have $\Pr[h(k, a) = h(k, b)] \leq \frac{\ell \cdot 2^{-\kappa}}{\kappa}$.

In Section \ref{statsec}, we have a security parameter $\lambda$, and we want a $\big(\frac{2^{-\lambda}}{n^2}\big)$-almost universal keyed hash function $h$. For this, we set $\kappa = \ceil{\lambda + \log_2(\ell n^2) + 1}$. Suppose $\ell \geq \kappa$, as $\ell < \kappa$ would allow the parties to exchange messages instead of hashes without any increase in complexity. Since we need $\kappa$ to divide $\ell$, before hashing messages of length $\ell$, we pad them with up to $\kappa - 1$ zeroes into messages of length $\ell' = \kappa \cdot \ceil[\big]{\frac{\ell}{\kappa}} < \kappa \cdot \frac{2\ell}{\kappa} = 2\ell$. So, we ensure that distinct padded $\ell'$-bit messages $a$ and $b$ collide on a random key with probability at most $\frac{\ell' \cdot 2^{-\kappa}}{\kappa} < 2\ell \cdot 2^{-\lambda - \log_2(\ell n^2) - 1} = \frac{2\ell}{2^{\lambda + 1}\ell n^2} = \frac{2^{-\lambda}}{n^2}$.

\section{Proof of Lemma \ref{graphlemma}} \label{kwaappendix}

\graphlemma*

\begin{proof}
    Let $G = (V, E), n, t, a, C, D$ all be as described in the statement above. For all $(x, y) \in V^2$, let $E_{x, y} = 1$ if there exists an edge in $G$ between the (possibly identical) vertices $x$ and $y$, and let $E_{x, y} = 0$ otherwise. Since every $x \in C$ has a closed neighborhood in $G$ of size at least $n - 3t$, we have $\sum_{x \in C}\sum_{y \in V}E_{x, y} \geq \sum_{x \in C}(n - 3t) = (n - 3t)^2$. Meanwhile, the vertices in $D$ have at most $|C| = n - 3t$ neighbors in $C$, which implies $\sum_{x \in D}\sum_{y \in C}E_{x, y} \leq \sum_{x \in D}(n - 3t) = |D|(n - 3t)$, and the vertices outside $D$ must (by definition since they are not in $D$) have less than $a(n - 3t)$ neighbors in $C$, which implies $\sum_{x \in V \setminus D}\sum_{y \in C}E_{x, y} \leq \sum_{x \in V \setminus D}a(n - 3t) = (n - t - |D|)a(n - 3t)$, with equality if and only if $V \setminus D = \varnothing$. Therefore, we have $(n - 3t)^2 \leq \sum_{x \in C}\sum_{y \in V}E_{x, y} = \sum_{y \in V}\sum_{x \in C}E_{y, x} = (\sum_{y \in D}\sum_{x \in C}E_{y, x}) + (\sum_{y \in V \setminus D}\sum_{x \in C}E_{y, x}) \leq |D|(n - 3t) + (n - t - |D|)a(n - 3t)$. Finally, simplifying the inequality $(n - 3t)^2 \leq |D|(n - 3t) + (n - t - |D|)a(n - 3t)$ by dividing both of its sides by $n - 3t$ and solving it for $|D|$ gives us $|D| \geq n - (3 + \frac{2a}{1 - a})t$.
\end{proof}

\end{document}